\renewcommand{\geq}{\geqslant}
\renewcommand{\leq}{\leqslant}
\renewcommand{\epsilon}{\varepsilon}
\newcommand{\alphah}{\widehat{\alpha}}
\newcommand{\st}{\ensuremath{\text{ st.\@ }}}
\newcommand{\eqdef}{\;\overset{\textup{def}}{=}\;}
\newcommand{\ham}{\ensuremath{\textup{Ham}}}
\newcommand{\concat}{\|}
\newcommand{\upto}{\ldots}
\newcommand{\wildcard}{\ensuremath{\star}\xspace}
\newcommand{\Fam}{\ensuremath{\mathbb{C}_m}} %% Family of cyclic permutations
\newcommand{\sExactWild}{\textsc{Shift-Exact\textsuperscript{\wildcard}}\xspace}
\newcommand{\DsExactWild}{\ensuremath{d_{\textup{E}}^+}}
\newcommand{\ssExactWild}{\textsc{ShiftScale-Exact\textsuperscript{\wildcard}}\xspace}
\newcommand{\sLtwoWild}{\textsc{Shift-$L_2^\wildcard$}\xspace}
\newcommand{\DsLtwoWild}{\ensuremath{d_2^+}}
\newcommand{\ssLtwoWild}{\textsc{ShiftScale-$L_2^\wildcard$}\xspace}
\newcommand{\DssLtwoWild}{\ensuremath{d_2^{1}}}
\newcommand{\LpolyWild}{\textsc{Poly-$r$-$L_2^\wildcard$}\xspace}
\newcommand{\DLpolyWild}{\ensuremath{d_2^r}}
\newcommand{\sHam}{\textsc{Shift-Ham}\xspace}
\newcommand{\DsHam}{\ensuremath{d_\textup{H}^+}}
\newcommand{\ssHam}{\textsc{ShiftScale-Ham}\xspace}
\newcommand{\DssHam}{\ensuremath{d_\textup{H}^1}}
\newcommand{\skMismatch}{\textsc{Shift-$k$-Mismatch}\xspace}
\newcommand{\DskMismatch}{\ensuremath{d_\textup{M}^+}}
\newcommand{\skDecision}{\textsc{Shift-$k$-Decision}\xspace}
\newcommand{\DskDecision}{\ensuremath{d_\textup{D}^+}}
\newcommand{\dHam}{\DsHam}
\newcommand{\threeSUM}{\textsc{3Sum}\xspace}
\newcommand{\geombase}{\textsc{GeomBase}\xspace}
\newcommand{\freduce}[1]{\ensuremath{\lll_{#1}\!}}
\newcommand{\cb}[1]{\makebox[4.05mm][c]{#1}}
\newcommand{\CB}[1]{\framebox[4.05mm][c]{#1}}
\newcommand{\cbS}[1]{\makebox[4.15mm][l]{#1}}
\newcommand{\cbx}[1]{\cb{$x_{\hspace{-0.0pt}#1}$}}
\newcommand{\cby}[1]{\cb{$y_{\hspace{-0.0pt}#1}$}}
\newcommand{\cbshrink}{\hspace{-0.5pt}}
\newcommand{\cbz}{\cb{0}}
\newcommand{\CBx}[1]{\CB{$x_{#1}$}}
\newcommand{\CBz}{\CB{0}}
\font\tenmsb=msbm10 \font\sevenmsb=msbm7 \font\fivemsb=msbm5
\newcommand{\margin}[1]{}
\theoremstyle{plain}
\newtheorem{theorem}{Theorem}[]
\newtheorem{lemma}[theorem]{Lemma}
\newtheorem{corollary}[theorem]{Corollary}
\newtheorem{problem}[theorem]{Problem}
\theoremstyle{definition}
\newtheorem{definition}[theorem]{Definition}
\providecommand\bibstyle@faked{}
\providecommand\bibdata@faked{}
\write\@mainaux{\noexpand\bibstyle@faked}%
\write\@mainaux{\noexpand\bibdata@faked}}
\title{Pattern Matching under Polynomial Transformation\thanks{~The results in Section~\ref{sec:TIL2}, except for those relating to higher degree polynomials, appeared in preliminary form in ``Self-normalised Distance with Don't Cares'', CPM '07.   Section~\ref{sec:TIk} contains revised and rewritten versions of results from ``Jump-Matching with Errors'', SPIRE '07.}}
\author{%
    \ Ayelet Butman,%
    \thanks{~Department of Computer Science, Holon Institute of Technology, Holon, Israel.}
    \ Peter Clifford,%
    \thanks{~Department of Statistics, University of Oxford, U.K.}
    \ Rapha\"el Clifford,%
    \thanks{~Department of Computer Science, University of Bristol, U.K.}
    \ Markus Jalsenius,%
    \footnotemark[4]~~
    \\Noa Lewenstein,%
    \thanks{~Department of Computer Science, Netanya Academic College, Israel.}
    \ Benny Porat,%
    \thanks{~Department of Computer Science, Bar-Ilan University, Israel.}
    \ Ely Porat%
    \footnotemark[6]
    \ and Benjamin Sach%
    \thanks{~Department of Computer Science, University of Warwick, U.K.}~
}
\date{}
\begin{document}

\maketitle

\begin{abstract}
    We consider a class of pattern matching problems where a normalising transformation is applied at every alignment. Normalised pattern matching plays a key role in fields as diverse as image processing and musical information processing where application specific transformations are often applied to the input.  By considering the class of polynomial transformations of the input, we provide  fast algorithms and the first lower bounds for both new and old problems.

    Given a pattern of length $m$ and a longer text of length $n$ where both are assumed to contain integer values only, we first show $O(n\log{m})$ time algorithms for pattern matching under linear transformations even when wildcard symbols can occur in the input.  We then show how to extend the technique to polynomial transformations of arbitrary degree.  Next we consider the problem of finding the minimum Hamming distance under polynomial transformation. We show that, for any $\epsilon>0$, there cannot exist an $O(nm^{1-\epsilon})$ time algorithm for additive and linear transformations conditional on the hardness of the classic \threeSUM problem. Finally, we consider a version of the Hamming distance problem under additive transformations with a bound $k$ on the maximum distance that need be reported.  We give a deterministic $O(nk \log k)$ time solution which we then improve by careful use of randomisation to $O(n\sqrt{k\log k}\log n)$ time for sufficiently small $k$. Our randomised solution outputs the correct answer at every position with high probability.
\end{abstract}

%\tikz[overlay,remember picture]
%{
%    \node at ($(current page.west)+(1.5,0)$) [rotate=90] {\LARGE\textcolor{gray}{\svnyear-\svnmonth-\svnday\ at\ \svnhour:\svnminute. Rev: \svnrev\ (\svnfilerev) by: \svnFullAuthor{\svnauthor}}};
%}

\section{Introduction}

We consider pattern matching problems where the task is to find the distance between a pattern and every substring of the text of suitable length.  In the class of problems we consider, the values in the pattern can first be transformed so as to minimise this distance. Further, the selection of which transformation to apply, which is possibly distinct for each alignment, forms part of the problem that is to be solved. This class of problems generalises the well known problem of exact matching with wildcards~\cite{CH:2002, Clifford:2007} as well as the set of problems known previously as transposition invariant matching~\cite{MNU:2005}, both of which come from the pattern matching literature. However as we will see, it is considerably broader than both with applications in both image processing and musical information retrieval.

By way of a first motivation for our work, consider a fundamental problem in image processing which is to measure the similarity between a small image segment or template and regions of comparable size within a larger scene. It is well known that the cross-correlation between the
two can be computed efficiently at every position in the larger image
using the fast Fourier transform (FFT).  In practice, images may
differ in a number of ways including being rotated, scaled or affected
by noise.  We consider here the case where the intensity or brightness
of an image occurrence is unknown and where parts of either image
contain \emph{don't care} or \emph{wildcard} pixels, i.e.\ pixels
that are considered to be irrelevant as far as image similarity is
concerned.  As an example, a rectangular image segment may contain a
facial image and the objective is to identify the face in a larger
scene. However, some faces in the larger scene are in shadow and
others in light.  Furthermore, background pixels around the faces may
be considered to be irrelevant for facial recognition and these should
not affect the search algorithm.

In order to overcome the first difficulty of varying intensity within
an image, a standard approach is to compute the \emph{normalised}
distance when comparing a template to part of a larger image.  Thus both template and image are transformed or rescaled
in order to make any matches found more meaningful and to allow comparisons between matches at different positions. Within the image processing literature the accepted method of normalisation is to scale the mean and variance of the template and image segments. We take a slightly different although related approach to normalisation which will allow to us to show a number of natural generalisations.

%
% We now formalise this notion of normalisation and set out the main problems in a one-dimensional context for simplicity. For a string $S$, the $i$th character is denoted $S[i]$, where $S[0]$ is the first character.

We start by defining measures of distance between a pattern $P$ and text $T$, where $P$ is a string of length $m$ and $T$ is a string of length $n\geq m$, both over the integers.
% with elements that are either integers or wildcard symbols. Let the text $T =T[0],\dots,T[n-1]$ and pattern $P=P[0],\dots,P[m-1]$.  In the absence of wildcard symbols,
The squared $L_2$ or Euclidean distance between the pattern and the text at position $i$ is
\begin{equation*}
    \sum_{j=0}^{m-1} \big(P[j] - T[i+j]\big)^2 \,.
\end{equation*}
In this case, for each $i\in\{0,\ldots,n-m\}$, the pattern can be normalised, or fitted as closely as possible to the text, by transforming the input to minimise the distance.

In the case of degree one polynomial transformations, the normalised $L_2$ distance between the pattern and the text at position $i$ can now be written as
\begin{equation*}
    \min_{\alpha, \beta} \sum_{j=0}^{m-1} \big(\alpha + \beta P[j] - T[i+j]\big)^2 \,,
\end{equation*}
where the minimisation is over rational values of $\alpha$ and $\beta$. The minimisation is per alignment of $P$ and $T$, hence the values of $\alpha$ and $\beta$ may (and probably will) differ between the positions $i$.

We also consider the case when the input alphabet is augmented with the special wildcard symbol, denoted~``\wildcard''. A position where either the pattern or text has a wildcard will not contribute to the distance. That is, the minimisation is carried out using the sum of the remaining terms. Details are given in the problem definitions in the next section.

%  We use the wildcard symbol \wildcard in the name of a problem to emphasise that wildcards in the text and pattern are allowed. Most of our problems are defined this way. By removing the wildcard symbol from the name we have the exact same problem only that wildcards are not allowed.

\subsection{Problems and our results}\label{sec:problems}
%\section{Problems, new results and related work}

The words \emph{shift} and \emph{scale} are used to refer to additive and multiplicative transformations of the pattern, respectively. The input to all our problems is a text $T$ of length $n$ and a pattern $P$ of length $m$, and the output is a problem specific distance $d(i)$ between $P$ and $T$ at every position $i\in\{0,\ldots,n-m\}$ of the text. To avoid overloading variable names, we give the distance $d(i)$ a unique name for each problem.

\begin{problem}[\sLtwoWild]
    \label{prob:sLtwoWild}
    Normalised $L_2$ distance under shifts. Wildcards are allowed.
    \begin{equation*}
        \DsLtwoWild(i) \eqdef \min_{\alpha} \sum_{j=0}^{m-1} \big(\alpha + P[j] - T[i+j]\big)^2 \,.
    \end{equation*}
    When either $P[j]=\wildcard$ or $T[i+j]=\wildcard$, the contribution of the pair to the sum $\DsLtwoWild(i)$ is taken to be zero. The minimisation is carried out using the sum of the remaining terms.
\end{problem}

Next we define the normalised $L_2$ distance under shifts and scaling, corresponding to a degree one polynomial transformation of the values of the pattern.

\begin{problem}[\ssLtwoWild]
    Normalised $L_2$ distance under shifts and scaling. Wildcards are allowed.
    \begin{equation*}
        \DssLtwoWild(i) \eqdef \min_{\alpha,\beta} \sum_{j=0}^{m-1} \big(\alpha + \beta P[j] - T[i+j]\big)^2 \,.
    \end{equation*}
    When either $P[j]=\wildcard$ or $T[i+j]=\wildcard$, the contribution of the pair to the sum \DssLtwoWild(i) is taken to be zero. The minimisation is carried out using the sum of the remaining terms.
\end{problem}

We show that both \sLtwoWild and \ssLtwoWild can be solved in $O(n\log{m})$ time by the use of FFTs of integer vectors. Our results are stated in Theorems~\ref{thm:sLtwoWild} and~\ref{thm:ssLtwoWild}. We assume the RAM model of computation throughout in order to be consistent with previous work on matching with wildcards. Further, our techniques also provide $O(n\log{m})$ time solutions (Theorems~\ref{thm:sExactWild} and~\ref{thm:ssExactWild}) to the problems of exact shift matching with wildcards (\sExactWild) and exact shift-scale matching with wildcards (\ssExactWild), formally defined as follows.

\begin{problem}[\sExactWild]
    \label{prob:sExactWild}
    Normalised exact matching under shifts. Wildcards are allowed.
    \begin{equation*}
        \DsExactWild(i) \eqdef
            \begin{cases}
                1, & \text{$\exists\, \alpha$ st.\ $\alpha + P[j]=T[i+j]$ for all $j\in\{0,\ldots,m-1\}$;} \\
                0, & \text{otherwise.}
            \end{cases}
    \end{equation*}
    Every position $j$ where either $P[j]=\wildcard$ or $T[i+j]=\wildcard$ is ignored.
\end{problem}

\begin{problem}[\ssExactWild]
    \label{prob:ssExactWild}
    Normalised exact matching under shifts and scaling. Wildcards are allowed. The problem is defined similarly to \sExactWild, only that we check whether there exist $\alpha$ and $\beta$ \st $\alpha + \beta P[j]=T[i+j]$ for all positions $j$ (except positions where $P[j]=\wildcard$ or $T[i+j]=\wildcard$).
\end{problem}

We will also discuss extensions to pattern transformations under polynomials of higher degree in Section~\ref{sec:TIL2}. In terms of normalised $L_2$ distance we give the following definition.

\begin{problem}[\LpolyWild]
    \label{prob:LpolyWild}
    Normalised $L_2$ distance under degree-$r$ polynomial transformation. Wildcards are allowed. Let $f(x)=\alpha_0 + \alpha_1 x + \alpha_2 x^2 + \cdots + \alpha_r x^r$ be a polynomial of degree~$r$ with $r\geq 1$.
    \begin{equation*}
        \DLpolyWild(i) \eqdef \min_{\alpha_0,\ldots,\alpha_r} \sum_{j=0}^{m-1} \big(f(P[j]) - T[i+j]\big)^2 \,.
    \end{equation*}
    When either $P[j]=\wildcard$ or $T[i+j]=\wildcard$, the contribution of the pair to the sum \DLpolyWild(i) is taken to be zero. The minimisation is carried out using the sum of the remaining terms.
\end{problem}

Note that the problem \ssLtwoWild is the same problem as \LpolyWild with degree $r=1$. We will show that \LpolyWild can be solved in $O(r n\log{m} + r^{w} n)$ time, where $w$ is the exponent for matrix multiplication (e.g., $w\approx 2.38$ when using the Coppersmith-Winograd algorithm).

The second main topic of our work is on normalised pattern matching problems under the Hamming distance. The Hamming distance is perhaps the most commonly considered measure of distance between strings in the field of pattern matching. We therefore define related normalised versions of our pattern matching problems in a similar way to before.

\begin{problem}[\sHam]
    \label{prob:sHam}
    Normalised Hamming distance under shifts. Wildcards are not allowed.
    \begin{equation*}
        \DsHam(i) \;\eqdef\; \min_{\alpha} \big| \Set{j | \alpha + P[j] \neq T[i+j]} \big| \,.
        % OLD with wildcards:
        % \DsHam(i) \;\eqdef\; \min_{\alpha} \big| \Set{j  |  (\alpha + P[j] \neq T[i+j]) \,\wedge\, (P[j]\neq\wildcard) \,\wedge\, (T[i+j] \neq \wildcard)} \big| \,.
    \end{equation*}
\end{problem}

\begin{problem}[\ssHam]
    \label{prob:ssHam}
    Normalised Hamming distance under shifts and scaling. Wildcards are not allowed.
    \begin{equation*}
        \DssHam(i) \eqdef \min_{\alpha,\beta} \big| \Set{j | \alpha + \beta P[j] \neq T[i+j]} \big| \,.
        % OLD with wildcards:
        % \DssHam(i) \eqdef \min_{\alpha,\beta} \big| \Set{j  |  (\alpha + \beta P[j] \neq T[i+j]) \,\wedge\, (P[j]\neq\wildcard) \,\wedge\, (T[i+j] \neq \wildcard)} \big| \,.
    \end{equation*}
\end{problem}

% ===============================
%
% \textbf{WILL CONTINUE FROM HERE}
%
% ===============================

Previously it has been shown that \sHam, sometimes also referred to as transposition invariant matching, can be solved in $O(nm\log{m})$ time~\cite{MNU:2005}.  It has been tempting to believe that it might be possible to improve this time complexity, particularly as there exist algorithms for standard non-normalised pattern matching under the Hamming distance which take $O(n\sqrt{m \log{m}})$ time~\cite{Abrahamson:1987,Kosaraju:1987}.
%We show by reductions from the well known \threeSUM problem that both shift and shift-scale matching under the Hamming distance may in fact require $\tilde{\Omega}(nm)$ time (Theorems~\ref{thm:sHamLower} and~\ref{thm:ssHamLower}) for certain inputs. For \sHam we will see that this lower bound is tight to within logarithmic factors.
We show by reductions from the well known \threeSUM problem that for both shift and shift-scale matching under the Hamming distance there cannot exist an $O(nm^{1-\epsilon})$ time algorithm for any $\epsilon>0$ (Theorems~\ref{thm:sHamLower} and~\ref{thm:ssHamLower}).

To circumvent this new conjectured lower bound, we consider as our last problem a shift version of the $k$-mismatch problem. In the $k$-mismatch problem, the Hamming distance is to be reported at every alignment as long as it is at most $k$.  If it is greater than $k$ then the algorithm is only required to report that the Hamming distance is large.  We define the problem as follows.

\begin{problem}[\skMismatch]
    \label{prob:skMismatch}
    Normalised $k$-mismatch under shifts. Wildcards are not allowed.
    \begin{equation*}
        \DskMismatch(i) \eqdef \min(\DsHam(i),\, k+1) \,.
    \end{equation*}
\end{problem}

We first give a simple deterministic $O(nk \log k)$ time solution (Theorem~\ref{thm:detkmis}).   We then consider a decision version of the problem where we output only the locations $i$ where $\DsHam(i) \leq k$ but not the Hamming distance at those locations. The decision version is defined as follows.

\begin{problem}[\skDecision]
    \label{prob:skDecision}
    Normalised $k$-mismatch decision problem under shifts. Wildcards are not allowed.
    \begin{equation*}
        \DskDecision(i) \eqdef
            \begin{cases}
                0, & \text{$\DsHam(i) \leq k$;} \\
                1, & \text{otherwise.}
            \end{cases}
    \end{equation*}
\end{problem}

Using randomisation we show how to solve this problem in $O(cn\sqrt{k\log k}\log n)$ time for the case that $k < \sqrt{m/6}$ (Theorem~\ref{thm:rankmis}). Here $c$ is a constant that can be chosen arbitrarily to fine tune the error probability. Namely, our algorithm outputs the correct answer at every alignment with probability at least $1-1/n^c$.  We therefore succeed in breaking our newly introduced running time barrier provided by the reduction from \threeSUM for a limited range of values of $k$.

\subsection{Related work}

\margin{Transposition inv: Add in transposition invariant previous work}

Combinatorial pattern matching has concerned itself mainly with
strings of symbolic characters where the distance between individual
characters is specified by some convention. For the $k$-mismatch problem, an $O(nk)$ time algorithm was given in 1986 that uses constant time lowest common ancestor queries on the suffix tree of the pattern and text in a technique that has subsequently come to be known as `kangaroo hopping'~\cite{LV:1986a}. Almost $20$~years afterwards, the asymptotic running time was finally improved in~\cite{ALP:2004} to $O(n\sqrt{k\log{k}})$ time by a method based on filtering, the suffix tree (with kangaroo hopping) and FFTs.  In 2002, a deterministic $O(n\log{m})$ time solution for exact matching with wildcards was given by Cole and Hariharan~\cite{CH:2002} and further simplified in~\cite{Clifford:2007}.  In the same paper by Cole and Hariharan, an $O\big(n\log(\max(m,N))\big)$ time algorithm for the exact shift matching problem we consider in Section~\ref{sec:TIL2} was presented. Here $N$ is the largest value in the input.  The approach we take to provide a simpler solution for this problem is similar in spirit to that of~\cite{Clifford:2007}.

There has also been some work in recent years on fast algorithms for distance calculation and approximate matching between numerical strings.  A number of different metrics
have been considered, with for example, $O(n\sqrt{m\log{m}})$ time solutions found for the $L_1$ distance \cite{Atallah:01,CCI:2005,ALPU:2005}
and less-than matching~\cite{Amir:1995} problems and an $O(\delta n \log{m})$ time algorithm for the $\delta$-bounded version of the
$L_{\infty}$ norm first discussed in \cite{CI:2004a} and then improved in \cite{CCI:2005, LP:2005}.

The most closely related work to ours comes under the heading of transposition invariant matching~\cite{LU:2000}.  The original motivation for this problem was within musical information retrieval where musical search is to be performed invariant of pitch level transposition. The transposition invariant distance between two equal lengthed strings $A$ and $B$ is defined to be $\min_{\alpha} d(A+\alpha, B)$, where $A+\alpha$ is the string obtained from $A$ by adding $\alpha$ to every value and the distance $d$ between strings can be variously defined.  Algorithms for transposition invariant Hamming distance, longest common subsequence (LCS) and Levenshtein (edit) distance amongst others were given in~\cite{MNU:2005} whose time complexities are close to the known upper bounds without transposition.   We show, in Section~\ref{sec:3SUM}, lower bounds for the special case of transposition invariant Hamming distance, which we named \sHam. Normalised pattern matching is also of central interest in the image processing literature where normalisation is typically performed by scaling the mean and standard deviation of the template and each suitably sized image segment to be $0$ and $1$, respectively. An asymptotically fast method for performing normalised cross-correlation for template matching, also using FFTs, was given in~\cite{Lewis:1995}.  The methods we give in Section~\ref{sec:TIL2} have some broad similarity to their approach only in the use of FFTs to provide fast solutions. Due to the differences in the definition of normalisation between our work and theirs, the solutions we give are otherwise quite distinct.

As a general class of problems, pattern matching under polynomial transformation is to the best of our knowledge new.  However, if we allow the degree of the polynomial transformation to increase to $m$, then determining for which alignment the normalised distance equals zero is equivalent to the known problem of function matching.  Function matching has a deterministic $O(n|\Sigma_P|\log{m})$ time solution, where $|\Sigma_P|$ is the size of the pattern alphabet, and a faster randomised algorithm which runs in $O(n\log{n})$ time and has failure probability $1/n$~\cite{AALP:2006}.  \margin{Higher degree: Add relationship to our higher degree result when it exists.}

\subsection{Basic notation}

For a string $X$ of length $\ell$, we write $X[i]$ to denote the $i$th character of $X$ such that $X=X[0]\,X[1]\,X[2]\cdots X[\ell-1]$ (the first index is always zero). The $s$-length substring of $X$ starting at position $i$ is denoted $X[i\upto i+s-1]$. For two strings $X$ and $Y$, the notion $X\concat Y$ is used to denote the string formed by concatenating $X$ and $Y$ in that order. All strings in this paper are over the integer alphabet. Therefore, $X[j]Y[j]$ denotes the product of the numerical characters $X[j]$ and $Y[j]$. If strings $X$ and $Y$ are of equal length, we use the notation $X\cdot Y$ for the string with characters $(X\cdot Y)[i]=X[i]Y[i]$. This element-wise arithmetic is used similarly for addition, subtraction, division and power. For example, the $i$th symbol of $X^2 / Y$ is $X[i]^2/Y[i]$. For a real value $k$, the scalar multiplication $kX$ is the string $(kX)[i]=kX[i]$.

The notation $\ham(X,Y)$ will be used to denote the Hamming distance between equal lengthed strings $X$ and $Y$:
\begin{equation*}
   \ham(X,Y) \eqdef \big|\Set{i | X[i] \neq Y[i] } \big| \,.
\end{equation*}

Throughout this paper we use $T$ to denote the text and $P$ for the pattern. We use $n$ to denote the length of $T$ and $m$ for the length of $P$.

Our algorithms in Section~\ref{sec:TIL2} make extensive use of FFTs.  An important property of the FFT is that the cross-correlation, defined as
\begin{equation*}
    (T \otimes P)[i] \eqdef \sum_{j=0}^{m-1}P[j] T[{i+j}] \,,
\end{equation*}
can be calculated accurately and efficiently for all $i\in\{0,\dots,n-m\}$ in $O(n \log m)$ time (see e.g. \cite{Cormen:1990}, Chapter 32). The time complexity is reduced from $O(n\log{n})$ to $O(n\log{m})$ using a standard splitting trick which partitions the text into $2m$ length substrings which overlap each other by $m$ characters. When it is clear from the context we use $\sum$ as an abbreviation for $\sum_{j=0}^{m-1}$.

We use ``\wildcard'' for the single character wildcard symbol. Under arithmetics on strings, as defined above, we may think of a wildcard as having the value zero. This value is, however, inconsequential for our purposes, as all expressions in this paper have the property that whenever a wildcard symbol is involved in some arithmetics, it is multiplied by a zero.

We write $[n]$ to denote the set of integers $\{0 \ldots n-1\}$. We also say that $g(n) \in \tilde{\Omega}(h(n))$ if and only if $g(n) \in \Omega(h(n)/\log^c{n})$ for some constant $c$, i.e $g(n) \in \Omega(h(n))$ up to log factors.

\subsection{Organisation}

The reminder of the paper is organised as follows. In Section~\ref{sec:TIL2} we discuss normalised pattern distance under $L_2$ distance (\sLtwoWild and \ssLtwoWild) and the decision variants (\sExactWild and \ssExactWild). We also show how to extend the methods to transformations of higher degree polynomials (\LpolyWild). Then in Section~\ref{sec:3SUM} we give running time lower bounds for \sHam and \ssHam by reduction from the \threeSUM problem.  In Section~\ref{sec:TIk} we introduce our new deterministic and randomised algorithms for \skMismatch and \skDecision. Finally, we conclude in Section~\ref{sec:discussion} and set out some open problems.

\section{Normalised $L_2$ distance}\label{sec:TIL2}

We give $O(n\log{m})$ time solutions for shift and shift-scale versions of the normalised $L_2$ distance problem with wildcards. We further show that this enables us to solve the exact shift matching and exact shift-scale matching problems in the same time complexity for inputs containing wildcard symbols. Lastly we show how to extend our solutions to normalisation under polynomials of arbitrary degree.

\subsection{Normalised $L_2$ distance under shifts}

In order to handle wildcards, we define two new strings $P'$ and $T'$ obtained from $P$ and $T$, respectively, such that $P'[j]=0$ if $P[j]=\wildcard$, and $P'[j]=1$ otherwise. Similarly, $T'[i]=0$ if $T[i]=\wildcard$, and $T'[i]=1$ otherwise. We can now express the shift normalised $L_2$ distance at position $i$ as
\begin{equation*}
    \DsLtwoWild(i) \,=\, \min_{\alpha} \sum_{j=0}^{m-1} \Big( \big(\alpha + P[j] - T[i+j]\big)^2 \cdot P'[j]\cdot T'[i+j] \Big) \,.
\end{equation*}
Algorithm~\ref{alg:shiftdistance} shows how to compute $\DsLtwoWild(i)$ for all positions $i$. Correctness and running time is given in the following theorem.

\begin{algorithm}[t]
    \caption{Solution to \sLtwoWild.
    \label{alg:shiftdistance}}
    \begin{enumerate}
        \item Construct $P'$ from $P$ such that $P'[j]=0$ if $P[j]=\wildcard$, and $P'[j]=1$ otherwise. Construct $T'$ from $T$ similarly.
        \item Compute the following six cross-correlations:

        \begin{tabular}{l}
            $C_1 \,=\, (T^2 \cdot T') \otimes P'$ \\
            $C_2 \,=\, (T \cdot T') \otimes (P \cdot P')$
        \end{tabular}
        \hfill
        \begin{tabular}{l}
            $C_3 \,=\, T' \otimes (P^2 \cdot P')$ \\
            $C_4 \,=\, (T \cdot T') \otimes P'$
        \end{tabular}
        \hfill
        \begin{tabular}{l}
            $C_5 \,=\, T' \otimes (P \cdot P')$ \\
            $C_6 \,=\, T'\otimes P'$
        \end{tabular}

        \item Return $A = C_1 - 2 C_2 + C_3 - \big( (C_4 - C_5)^2 / C_6 \big)$. We have $\DsLtwoWild(i)=A[i]$. For positions~$i$ where $C_6[i]=0$ we have $\DsLtwoWild(i)=0$.
    \end{enumerate}
    \vspace{-8pt}
\end{algorithm}

\begin{theorem}
    \label{thm:sLtwoWild}
    The shift version of the normalised $L_2$ distance with wildcards problem (\sLtwoWild) can be solved in $O(n\log{m})$ time.\end{theorem}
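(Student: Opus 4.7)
\medskip
\noindent\textbf{Proof plan.} My plan is to verify that Algorithm~\ref{alg:shiftdistance} returns the correct value by exhibiting the optimal shift $\alpha$ analytically, and then to observe that each of the six quantities $C_1,\ldots,C_6$ is a cross-correlation of integer strings of length $n$ against patterns of length $m$, hence computable in $O(n\log m)$ time via FFT (using the standard $2m$-chunk splitting trick mentioned in Section~\ref{sec:problems}). The total running time is then $6\cdot O(n\log m)=O(n\log m)$, plus $O(n)$ to combine the six arrays pointwise in step~3.

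The core of the argument is the correctness of step~3. Fix an alignment $i$ and write $w_j \eqdef P'[j]\,T'[i+j]$, which is $1$ exactly when neither $P[j]$ nor $T[i+j]$ is a wildcard, and $0$ otherwise. Then the objective to minimise is the quadratic
\begin{equation*}
    f(\alpha) \,=\, \sum_{j=0}^{m-1}\bigl(\alpha + P[j] - T[i+j]\bigr)^2 w_j
    \,=\, \alpha^2 \sum_j w_j \,+\, 2\alpha \sum_j\bigl(P[j]-T[i+j]\bigr)w_j \,+\, \sum_j\bigl(P[j]-T[i+j]\bigr)^2 w_j.
\end{equation*}
Setting $f'(\alpha)=0$ yields the optimal shift $\alpha^* = \bigl(\sum_j T[i+j]w_j - \sum_j P[j] w_j\bigr)\big/\sum_j w_j$, provided $\sum_j w_j \neq 0$. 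Substituting $\alpha^*$ back into $f$ gives
\begin{equation*}
    f(\alpha^*) \,=\, \sum_j\bigl(P[j]-T[i+j]\bigr)^2 w_j \,-\, \frac{\bigl(\sum_j T[i+j]w_j - \sum_j P[j] w_j\bigr)^2}{\sum_j w_j}.
\end{equation*}

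Next I would identify each sum as an entry of one of the $C_\ell$. Expanding $(P[j]-T[i+j])^2 w_j$ into three terms $T[i+j]^2\,T'[i+j]\,P'[j]$, $T[i+j]\,T'[i+j]\cdot P[j]P'[j]$, and $T'[i+j]\cdot P[j]^2 P'[j]$, these are precisely $C_1[i]$, $C_2[i]$ and $C_3[i]$ respectively; similarly $\sum_j T[i+j] w_j = C_4[i]$, $\sum_j P[j] w_j = C_5[i]$ and $\sum_j w_j = C_6[i]$. The returned value $C_1-2C_2+C_3 - (C_4-C_5)^2/C_6$ therefore equals $f(\alpha^*) = \DsLtwoWild(i)$ at every position with $C_6[i]>0$.

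The only subtlety is the degenerate case $C_6[i]=0$, meaning every paired position involves a wildcard. Then the sum in the definition of $\DsLtwoWild(i)$ is empty regardless of $\alpha$, so the correct value is $0$, which matches what the algorithm returns. I expect this boundary case to be the only real place where care is needed; the rest is a routine calculus-plus-FFT argument. Finally, since the inputs are integers and $C_6[i]$ is an integer, all arithmetic can be carried out exactly in the RAM model, avoiding any numerical precision concerns in the final division.
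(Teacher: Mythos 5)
Your proposal is correct and follows essentially the same route as the paper's own proof: differentiate the weighted quadratic in $\alpha$, substitute the minimiser back in, identify the resulting sums with the six cross-correlations $C_1,\ldots,C_6$, and handle the all-wildcard case $C_6[i]=0$ separately. Your write-up is, if anything, slightly more explicit about the back-substitution step, but there is no substantive difference in approach.
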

\begin{proof}
    Consider Algorithm~\ref{alg:shiftdistance}. We first analyse the running time. Step~1 requires only single passes over the input. Similarly, $(P^2 \cdot P')$, $(P \cdot P')$, $(T \cdot T')$ and $(T^2 \cdot T')$ can all be calculated in linear time once $T'$ and $P'$ are known. Using the FFT, the six cross-correlations in Step~2 can be calculated in $O(n\log{m})$ time. The final vector of Step~3 is obtained in linear time. Thus, $O(n\log{m})$ is the overall time complexity of the algorithm.

    To show correctness we consider the minimum value of
    \begin{equation}
        \label{eq:sLtwoMinimise}
        A[i] \,=\, \sum_{j=0}^{m-1} \Big( \big(\alpha + P[j] - T[i+j]\big)^2 \cdot P'[j]\cdot T'[i+j] \Big) \,.
    \end{equation}
    This can be obtained by differentiating with respect to $\alpha$ and obtaining the minimising value. Solving
    \begin{equation*}
        \frac{\partial A[i]}{\partial \alpha} \,=\, 2 \sum_{j=0}^{m-1} \Big( \big(\alpha + P[j] - T[i+j] \big) \cdot P'[j] \cdot T'[i+j] \Big) \,=\, 0
    \end{equation*}
    gives us the value
    \begin{equation*}
        \widehat{\alpha} \,=\, \frac{\sum \Big( \big( T[i+j] - P[j] \big) \cdot P'[j] \cdot T'[i+j] \Big)}{\sum P'[j]\cdot T'[i+j]} \,=\, \frac{\big((T \cdot T') \otimes P'\big) - \big((P \cdot P') \otimes T' \big)}{P'\otimes T'} \,,
    \end{equation*}
    where $\widehat{\alpha}[i]$ is the minimising value at position~$i$. Substituting $\alpha = \widehat{\alpha}$ into Equation~(\ref{eq:sLtwoMinimise}), expanding and collecting terms, we obtain the final answer as
    \begin{equation*}
        A \,=\, C_1 - 2 C_2 + C_3 - \frac{(C_4 - C_5)^2}{C_6} \,,
    \end{equation*}
    where $C_1,\dots,C_6$ are the correlations defined in Algorithm~\ref{alg:shiftdistance}.

    Lastly we observe that when $C_6[i]=(T'\otimes P')[i]=0$ there is a wildcard at every position in the alignment of $P$ and $T$. Here the shift normalised $L_2$ distance is defined to be~0.
\end{proof}

\subsection{Normalised $L_2$ distance under shift-scale}

Similarly to the shift version of normalised $L_2$ distance in the previous section, we can now solve the shift-scale version. The solution is slightly more involved but the running time remains the same. Algorithm~\ref{alg:shift-scaleddistance} sets out the main steps to achieve this and the result is summarised in the following theorem.

\begin{algorithm}[t]
    \caption{Solution to \ssLtwoWild.
    \label{alg:shift-scaleddistance}}
    \begin{enumerate}
        \item Construct $P'$ from $P$ such that $P'[j]=0$ if $P[j]=\wildcard$, and $P'[j]=1$ otherwise. Construct $T'$ from $T$ similarly.
        \item Compute the following six cross-correlations:

        \begin{tabular}{l}
            $C_1 \,=\, (T^2 \cdot T') \otimes P'$ \\
            $C_2 \,=\, (T \cdot T') \otimes (P \cdot P')$
        \end{tabular}
        \hfill
        \begin{tabular}{l}
            $C_3 \,=\, T' \otimes (P^2 \cdot P')$ \\
            $C_4 \,=\, (T \cdot T') \otimes P'$
        \end{tabular}
        \hfill
        \begin{tabular}{l}
            $C_5 \,=\, T' \otimes (P \cdot P')$ \\
            $C_6 \,=\, T'\otimes P'$
        \end{tabular}

        \item Compute

        $B_1 = C_3 \cdot C_4 - C_2 \cdot C_5 \,, \quad
            B_2 = C_3 \cdot C_6 - C_5^2 \,, \quad
            B_3 = C_2 - \dfrac{C_4 \cdot C_5}{C_6} \,, \quad
            B_4 = C_3 - \dfrac{C_5^2}{C_6}$

        and compute $\widehat{\alpha} = B_1/B_2$ and $\widehat{\beta} = B_3/B_4$. At positions $i$ where $C_6[i]=0$, set $\widehat{\alpha}[i]=\widehat{\beta}[i]=0$. At positions $i$ where $B_2[i]=0$ and $C_6[i]\neq 0$, set $\widehat{\alpha}[i]=C_4/C_6$ and $\widehat{\beta}[i]=0$.

        \item Return
            $B = (\widehat{\alpha}^2 \cdot C_6) +
                2 (\widehat{\alpha} \cdot \widehat{\beta} \cdot C_5) -
                2 (\widehat{\alpha} \cdot C_4) +
                (\widehat{\beta}^2 \cdot C_3) -
                2 (\widehat{\beta} \cdot C_2) +
                C_1$.

            We have $\DssLtwoWild(i)=B[i]$.
    \end{enumerate}
    \vspace{-8pt}
\end{algorithm}

\begin{theorem}
    \label{thm:ssLtwoWild}
    The shift-scale version of the normalised $L_2$ distance with wildcards problem (\ssLtwoWild) can be solved in $O(n\log{m})$ time.
\end{theorem}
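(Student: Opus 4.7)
The plan is to follow the same recipe as in the proof of Theorem~\ref{thm:sLtwoWild}, extended to the two-parameter minimisation over $\alpha$ and $\beta$. The running-time analysis is essentially identical: constructing $P'$ and $T'$ and the auxiliary strings $P\cdot P'$, $P^2\cdot P'$, $T\cdot T'$, $T^2\cdot T'$ takes linear time; the six cross-correlations in Step~2 are the same as in Algorithm~\ref{alg:shiftdistance} and cost $O(n\log m)$ by FFT; and the pointwise arithmetic in Steps~3 and~4 is linear. So the nontrivial work is correctness.

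For correctness, I would first rewrite
\begin{equation*}
    B[i] \,=\, \sum_{j=0}^{m-1} \bigl(\alpha + \beta P[j] - T[i+j]\bigr)^{2}\cdot P'[j]\cdot T'[i+j]
\end{equation*}
and take partial derivatives with respect to $\alpha$ and $\beta$. Setting both to zero and expressing the resulting sums using the correlations of Algorithm~\ref{alg:shift-scaleddistance} yields the $2\times 2$ linear system
\begin{equation*}
    \begin{pmatrix} C_6 & C_5 \\ C_5 & C_3 \end{pmatrix}
    \begin{pmatrix} \alpha \\ \beta \end{pmatrix}
    \,=\,
    \begin{pmatrix} C_4 \\ C_2 \end{pmatrix}
\end{equation*}
at each position $i$. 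Cramer's rule gives $\widehat{\alpha}=(C_3C_4-C_2C_5)/(C_3C_6-C_5^{2})=B_1/B_2$ and $\widehat{\beta}=(C_2C_6-C_4C_5)/(C_3C_6-C_5^{2})=B_3/B_4$, matching Step~3 of the algorithm (the form $B_3/B_4$ is just $B_1/B_2$ after dividing numerator and denominator by $C_6$).

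Next I would substitute $\alpha=\widehat{\alpha}$ and $\beta=\widehat{\beta}$ back into the expanded quadratic
\begin{equation*}
    B \,=\, \alpha^{2} C_6 + 2\alpha\beta C_5 - 2\alpha C_4 + \beta^{2} C_3 - 2\beta C_2 + C_1 \,,
\end{equation*}
which is exactly the expression returned in Step~4; the six coefficients arise directly from reading off the correlations corresponding to each product of $1$, $P[j]$, $P[j]^{2}$, $T[i+j]$, $T[i+j]^{2}$, $P[j]T[i+j]$ weighted by $P'[j]T'[i+j]$. The Hessian is $2\begin{pmatrix} C_6 & C_5\\ C_5 & C_3\end{pmatrix}$, which is positive semidefinite, so the critical point is indeed a minimiser whenever the matrix is nonsingular.

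The main obstacle, as in Algorithm~\ref{alg:shiftdistance}, is the edge-case analysis when the linear system is singular. Two regimes must be handled separately. First, if $C_6[i]=0$ then every aligned position contains a wildcard and the problem defines the distance to be zero, so setting $\widehat{\alpha}[i]=\widehat{\beta}[i]=0$ makes every $C_k[i]=0$ and $B[i]=0$. Second, if $C_6[i]\neq 0$ but $B_2[i]=C_3[i]C_6[i]-C_5[i]^{2}=0$, then by the equality case of Cauchy--Schwarz applied to the vectors indexed by the non-wildcard positions, $P[j]$ is constant over all $j$ contributing to the sum. In that case $\beta$ only adds a constant that can be absorbed into $\alpha$, so we may fix $\widehat{\beta}[i]=0$ and reduce to the one-variable minimisation of Theorem~\ref{thm:sLtwoWild}, whose optimum is $\widehat{\alpha}[i]=C_4[i]/C_6[i]$; plugging these into the Step~4 formula then yields the correct minimum. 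This is precisely what Step~3 prescribes, which completes the correctness argument.
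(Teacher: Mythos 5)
Your proof is correct and follows essentially the same route as the paper's: differentiate with respect to $\alpha$ and $\beta$, solve the resulting $2\times 2$ linear system to obtain $\widehat{\alpha}=B_1/B_2$ and $\widehat{\beta}=B_3/B_4$, substitute back into the expanded quadratic, and handle the singular cases ($C_6[i]=0$ and $B_2[i]=0$ via the equality case of Cauchy--Schwarz) exactly as the paper does. The only blemish is the parenthetical claiming ``$B_3/B_4$ is just $B_1/B_2$ after dividing by $C_6$'' --- you mean that $B_3/B_4$ equals $(C_2C_6-C_4C_5)/(C_3C_6-C_5^2)$ after that division, since $B_1/B_2$ is $\widehat{\alpha}$, not $\widehat{\beta}$.
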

\begin{proof}
    Consider Algorithm~\ref{alg:shift-scaleddistance}. Notice that the same six correlations as in Algorithm~\ref{alg:shiftdistance} have to be calculated. The additional strings in Step~3 require linear time, as well as producing the output in Step~4. Hence the overall running time is $O(n\log{m})$.

    Similarly to Equation~(\ref{eq:sLtwoMinimise}) we can express the shift-scale version of the normalised $L_2$ distance at position $i$ as
    \begin{equation}
        \label{eq:ssLtwoMinimise}
        B[i] \,=\, \sum_{j=0}^{m-1} \Big( \big(\alpha + \beta P[j] - T[i+j]\big)^2 \cdot P'[j]\cdot T'[i+j] \Big) \,.
    \end{equation}
    By minimising this expression with respect to both $\alpha$ and $\beta$ we get a system of two simultaneous linear equations
    \begin{align*}
        \frac{\partial B[i]}{\partial \alpha} \,&=\, 2 \sum_{j=0}^{m-1} \Big( \big(\alpha + \beta P[j] - T[i+j] \big) \cdot P'[j] \cdot T'[i+j] \Big) \,=\, 0 \,, \\
        \frac{\partial B[i]}{\partial \beta} \,&=\, 2 \sum_{j=0}^{m-1} \Big( \big(\alpha + \beta P[j] - T[i+j] \big) \cdot P[j] \cdot P'[j] \cdot T'[i+j] \Big) \,=\, 0 \,.
    \end{align*}
    By solving this system and using the definitions of $B_1,\dots,B_4$ in Algorithm~\ref{alg:shift-scaleddistance}, we get the minimising values
    \begin{equation*}
        \widehat{\alpha} = \frac{B_1}{B_2} \qquad \text{and} \qquad \widehat{\beta} = \frac{B_3}{B_4} \,.
    \end{equation*}
    For some positions $i$, the solution to the system might not be unique. This happens at alignments $i$ for which every position $i+j$ has a wildcard, hence $C_6[i]=0$. Here we avoid illegal division by zero by simply setting both $\widehat{\alpha}[i]$ and $\widehat{\beta}[i]$ to zero (any value would do). A non-unique solution also occurs at alignments $i$ where all $P[j]$ are identical over every non-wildcard position $i+j$. This is characterised by $B_2[i]=0$. To see this, observe that $ C_5[i]^2\leq C_3[i] C_6[i]$ by Cauchy-Schwarz inequality. Here we set (arbitrarily) $\widehat{\beta}[i]=0$ and therefore obtain the minimising value $\widehat{\alpha}[i]=C_4/C_6$.

    At Stage~4, $\widehat{\alpha}$ and $\widehat{\beta}$ contain the minimising values for $\alpha$ and $\beta$ at every position. We substitute these into Equation~(\ref{eq:ssLtwoMinimise}) and expand. This gives us the expression for $B$.
\end{proof}

\subsection{Exact shift and shift-scale matching with wildcards}

For the exact shift matching problem with wildcards, \sExactWild, a match is said to occur at location $i$ if, for some shift $\alpha$ and  for every position $j$ in the pattern, either $\alpha + P[j]=T[i+j]$ or at least one of $P[j]$ and $T[i+j]$ is the wildcard symbol. Cole and Hariharan~\cite{CH:2002} introduced a new coding for this problem that maps the string elements into~$0$ for wildcards and complex numbers of modulus~$1$ otherwise.  The FFT is then used to find the (complex) cross-correlation between these coded strings, and finally a shift match is declared at location $i$ if the $i$th element of the modulus of the cross-correlation is equal to $(P'\otimes T')[i]$.

Our Algorithm~\ref{alg:shiftdistance} provides a straightforward alternative method for shift matching with wildcards.  It has the advantage of only using simple integer codings.  Since Algorithm~\ref{alg:shiftdistance} finds the minimum $L_2$ distance at location $i$, over all possible shifts, it is only necessary to test whether this distance is zero.  The running time for the test is then $O(n\log{m})$ since it is determined by the running time of Algorithm~\ref{alg:shiftdistance}.

\begin{theorem}
    \label{thm:sExactWild}
    The problem of exact shift matching with wildcards (\sExactWild) can be solved in $O(n\log{m})$ time.
\end{theorem}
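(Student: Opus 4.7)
The plan is to derive this result as an almost immediate corollary of Theorem~\ref{thm:sLtwoWild}. The key observation is that \sExactWild and \sLtwoWild encode the same decision via a simple arithmetic test at each alignment: a valid shift-exact match exists at position $i$ if and only if $\DsLtwoWild(i)=0$. So the algorithm I would use is to run Algorithm~\ref{alg:shiftdistance} and then, for each $i$, report $\DsExactWild(i)=1$ exactly at those positions where $\DsLtwoWild(i)=0$. The additional check is $O(1)$ per alignment, so the total running time is dominated by Algorithm~\ref{alg:shiftdistance}, which is $O(n\log m)$.

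Correctness reduces to establishing the above biconditional. For the easy direction, if some $\alpha$ satisfies $\alpha + P[j] = T[i+j]$ at every non-wildcard pair, then each summand of
\begin{equation*}
    \sum_{j=0}^{m-1}\bigl(\alpha + P[j] - T[i+j]\bigr)^2 \cdot P'[j] \cdot T'[i+j]
\end{equation*}
vanishes, giving $\DsLtwoWild(i) = 0$. For the converse, since the inputs are integer valued (wildcards having been absorbed into the masking factor $P'\cdot T'$), every summand is a non-negative integer, so the total is zero only when each summand is individually zero. At positions $j$ with $P'[j]\cdot T'[i+j]=1$ this forces $\widehat{\alpha} + P[j] = T[i+j]$, and the common value of $\widehat{\alpha}$ is precisely the shift required by Problem~\ref{prob:sExactWild}.

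The only subtle case is when every pair $(P[j], T[i+j])$ at alignment $i$ involves a wildcard, which is flagged by $C_6[i]=0$ in Algorithm~\ref{alg:shiftdistance}. In this case Algorithm~\ref{alg:shiftdistance} returns $\DsLtwoWild(i)=0$, and by Problem~\ref{prob:sExactWild} any $\alpha$ trivially witnesses a shift-exact match, so reporting $\DsExactWild(i)=1$ is consistent. There is no genuine obstacle to the argument; the proof is a short reduction resting entirely on the correctness and running time established in Theorem~\ref{thm:sLtwoWild}.
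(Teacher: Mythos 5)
Your proposal matches the paper's own argument: the paper likewise solves \sExactWild by running Algorithm~\ref{alg:shiftdistance} and declaring a match exactly where the minimum $L_2$ distance is zero, so the running time is inherited from Theorem~\ref{thm:sLtwoWild}. One trivial imprecision: since $\widehat{\alpha}$ is rational the summands are non-negative reals rather than integers, but the ``sum of non-negatives is zero iff each term is zero'' step holds regardless, so the argument is fine.
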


The exact shift-scale matching problem with wildcards, \ssExactWild, can be solved similarly by applying Algorithm~\ref{alg:shift-scaleddistance}.

\begin{theorem}
    \label{thm:ssExactWild}
    The problem of exact shift-scale matching with wildcards (\ssExactWild) can be solved in $O(n\log{m})$ time.
\end{theorem}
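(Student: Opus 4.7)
The plan is to reduce \ssExactWild to the minimisation problem already solved by Theorem~\ref{thm:ssLtwoWild}, in direct analogy with the $O(n\log m)$ solution for \sExactWild via Algorithm~\ref{alg:shiftdistance}. Concretely, I would run Algorithm~\ref{alg:shift-scaleddistance} on $P$ and $T$ once to obtain the vector $\DssLtwoWild$ at every alignment $i\in\{0,\ldots,n-m\}$; by Theorem~\ref{thm:ssLtwoWild} this costs $O(n\log m)$ time, and this will be the dominant step.

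Next I would observe that the quantity being minimised is a sum of squared real numbers weighted by the $0/1$ indicators $P'[j]\cdot T'[i+j]$, hence non-negative. Therefore $\DssLtwoWild(i)=0$ if and only if every summand vanishes at the minimising $\widehat{\alpha},\widehat{\beta}$, i.e.\ if and only if there exist rationals $\alpha,\beta$ with $\alpha+\beta P[j]=T[i+j]$ at every position $j$ where neither of $P[j]$ and $T[i+j]$ is a wildcard. By Problem~\ref{prob:ssExactWild}, this is exactly the condition for an exact shift-scale match at alignment $i$. Accordingly I would declare a match at precisely those positions $i$ where $\DssLtwoWild(i)=0$ and a non-match elsewhere; this post-processing is a single $O(n)$ scan, so the total time stays $O(n\log m)$.

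The only subtlety I anticipate concerns the degenerate alignments that Algorithm~\ref{alg:shift-scaleddistance} handles via explicit fall-backs (the cases $C_6[i]=0$, and $B_2[i]=0$ with $C_6[i]\neq 0$). When $C_6[i]=0$ the alignment is entirely wildcard and vacuously matches under any $\alpha,\beta$, so the computed $\DssLtwoWild(i)=0$ is the correct indicator of a match. When $B_2[i]=0$ with $C_6[i]\neq 0$ the Cauchy--Schwarz argument inside the proof of Theorem~\ref{thm:ssLtwoWild} shows that the fall-back choice $(\widehat{\alpha},\widehat{\beta})=(C_4[i]/C_6[i],\,0)$ still attains the true minimum of Equation~(\ref{eq:ssLtwoMinimise}), so the zero-test again correctly flags a match exactly when one exists. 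Since both subcases are already correctly resolved inside Theorem~\ref{thm:ssLtwoWild}, no additional machinery beyond a single zero-check per alignment is needed, and I do not foresee any further technical obstacle.
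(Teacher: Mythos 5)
Your proposal is correct and is essentially the paper's own argument: the paper solves \ssExactWild by running Algorithm~\ref{alg:shift-scaleddistance} and declaring a match exactly where the computed minimum $L_2$ distance is zero, just as it does for \sExactWild with Algorithm~\ref{alg:shiftdistance}. Your additional discussion of the degenerate alignments ($C_6[i]=0$ and $B_2[i]=0$) is a more careful treatment than the paper provides, but it stays within the same approach.
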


\subsection{Normalised $L_2$ distance under higher degree transformations}

We can now consider the problem of computing the normalised $L_2$ distance under general polynomial transformations. The problem, which we termed \LpolyWild, was defined in Problem~\ref{prob:LpolyWild}. Recall that we let
\begin{equation*}
    f(x)=\alpha_0 + \alpha_1 x + \alpha_2 x^2 + \cdots + \alpha_r x^r
\end{equation*}
be a polynomial of degree~$r\geq 1$.
% and let us define
%
%\begin{equation*}
%    f_{-a_k}(x) \eqdef f(x) - a_k x^k \,.
%\end{equation*}
%
Similarly to the shift and shift-scale versions of the normalised $L_2$ distance we consider the minimum value of
\begin{equation}
    \label{eq:polyLtwoWild}
    D[i] = \sum_{j=0}^{m-1} \Big( \big(f(P[j]) - T[i+j]\big)^2 \cdot P'[j]\cdot T'[i+j] \Big) \,.
\end{equation}
By differentiating with respect to each $\alpha_{k}$ in turn, giving
\begin{equation*}
    \frac{\partial D[i]}{\partial \alpha_k} \,=\, 2 \sum_{j=0}^{m-1} \Big( \big( f(P[j])  - T[i+j] \big) \cdot P[j]^k \cdot P'[j] \cdot T'[i+j] \Big) \,=\, 0 \,,
\end{equation*}
we obtain a system of $r+1$ linear equations in $r+1$ unknowns for each alignment $i$ of the pattern and text. We need to solve these equations and then substitute the minimising $\alpha_k$ values back into Equation~(\ref{eq:polyLtwoWild}) as we did in the proof of Theorem~\ref{thm:ssLtwoWild}. This procedure is captured by the following theorem.

\begin{theorem}
    The normalised $L_2$ distance problem with wildcards under polynomial transformations of degree $r$ (\LpolyWild) can be solved in  $O(rn\log{m} +   r^{2.38}n)$ time.
\end{theorem}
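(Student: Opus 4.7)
The plan is to generalise the derivation already carried out for \ssLtwoWild. Expanding the stationarity conditions $\partial D[i]/\partial \alpha_k = 0$ for $k=0,\ldots,r$ and substituting $f(P[j])=\sum_{\ell=0}^{r}\alpha_\ell P[j]^\ell$, the system becomes, for each alignment $i$,
\begin{equation*}
    \sum_{\ell=0}^{r} \alpha_\ell \sum_{j=0}^{m-1} P[j]^{k+\ell}\,P'[j]\,T'[i+j] \;=\; \sum_{j=0}^{m-1} T[i+j]\,P[j]^k\,P'[j]\,T'[i+j],
\end{equation*}
that is, an $(r+1)\times(r+1)$ linear system $M_i\,\widehat{\alpha}^{(i)} = b_i$, where the matrix entry $(M_i)_{k,\ell}$ depends only on $k+\ell$ (a Hankel matrix) and equals $\bigl(T'\otimes(P^{k+\ell}\cdot P')\bigr)[i]$, while $(b_i)_k = \bigl((T\cdot T')\otimes(P^k\cdot P')\bigr)[i]$.

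First I would precompute, using the FFT, the $2r+1$ cross-correlations $T'\otimes(P^s\cdot P')$ for $s=0,\ldots,2r$, together with the $r+1$ cross-correlations $(T\cdot T')\otimes(P^k\cdot P')$ for $k=0,\ldots,r$, and also $(T^2\cdot T')\otimes P'$. This totals $O(r)$ cross-correlations of length-$n$ signals with a length-$m$ pattern, costing $O(rn\log m)$. After this precomputation, every $M_i$ and $b_i$ is available in $O(r)$ time per alignment just by reading off the appropriate correlation values.

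Next, at each of the $n-m+1$ alignments I would solve the $(r+1)\times(r+1)$ linear system for $\widehat{\alpha}^{(i)}$ in $O(r^{w})$ time, where $w\approx 2.38$, giving $O(r^{w}n)$ over all positions. To output the minimum value $D[i]$, rather than plugging back into the degree-$r$ expansion of Equation~(\ref{eq:polyLtwoWild}), I would use the identity $\widehat{\alpha}^{\top} M_i\,\widehat{\alpha} = \widehat{\alpha}^{\top} b_i$ (which holds because $\widehat{\alpha}$ solves the normal equations) to collapse the substitution to
\begin{equation*}
    D[i] \;=\; \bigl((T^2\cdot T')\otimes P'\bigr)[i] \;-\; \widehat{\alpha}^{(i)\,\top} b_i,
\end{equation*}
evaluated in $O(r)$ per alignment. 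Summing the three contributions yields the claimed bound $O(rn\log m + r^{w}n)$.

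The main obstacle is the handling of singular $M_i$. As in the proof of Theorem~\ref{thm:ssLtwoWild}, singularity corresponds to alignments where the non-wildcard part of the pattern attains fewer than $r+1$ distinct values, so the minimiser is not unique. I would deal with this robustly by performing the linear solve via a rank-revealing factorisation (e.g.\ Gaussian elimination with pivoting, which still fits in $O(r^{w})$ time per position); whenever the system is under-determined, fixing the free $\alpha_k$'s to zero yields a valid minimiser, and the formula $D[i]=c_i-\widehat{\alpha}^{(i)\,\top} b_i$ still returns the correct optimal value because it only depends on $M_i\widehat{\alpha}^{(i)}=b_i$, which holds for any least-squares solution.
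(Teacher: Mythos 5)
Your proposal is correct and follows the same overall route as the paper: differentiate Equation~(\ref{eq:polyLtwoWild}) to get an $(r+1)\times(r+1)$ system of normal equations per alignment, observe that all coefficients come from $O(r)$ cross-correlations (costing $O(rn\log m)$), and solve each system in $O(r^{w})$ time. Where you genuinely improve on the paper's argument is the final step: the paper substitutes the minimising $\alpha_k$ back into the expanded quadratic, which it costs at $O(r)$ further cross-correlations plus $O(r^2)$ vector products (leaving a residual $r^2 m$ term in its stated running time), whereas your identity $D[i]=\bigl((T^2\cdot T')\otimes P'\bigr)[i]-\widehat{\alpha}^{(i)\,\top}b_i$, valid because $\widehat{\alpha}^{(i)}$ satisfies $M_i\widehat{\alpha}^{(i)}=b_i$, does the back-substitution in $O(r)$ per alignment and makes that term disappear. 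You also treat the singular case explicitly (and correctly: the normal equations are always consistent, and the value $c_i-\widehat{\alpha}^{\top}b_i$ is the same for every solution since $M_i$ is symmetric), which the paper handles only for $r=1$ in Theorem~\ref{thm:ssLtwoWild} and silently ignores here. The one caveat is your parenthetical claim that Gaussian elimination with pivoting runs in $O(r^{w})$: classical elimination is $O(r^3)$, so to stay within $O(r^{2.38}n)$ in the degenerate case you should appeal to a matrix-multiplication-time rank-revealing LU factorisation rather than the classical algorithm; this is a citation-level fix, not a gap in the argument.
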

\begin{proof}
    To compute the coefficients for the first linear equation for $\alpha_0$ we need to perform $O(r)$ cross-correlations.  However, for each subsequent equation for $\alpha_1 \ldots \alpha_r$ we only need to perform a constant number of new cross-correlations.  Therefore the total number of cross-correlations is $O(r)$ to give the coefficients of all the equations, taking $O(r n\log{m})$ time overall.  The time to solve the systems of $O(r)$ equations in $O(r)$ unknowns is $O(r^w)$ per alignment $i$, where $w$ is the exponent for matrix multiplication.  This gives $O(nr^w)$ time or $O(nr^{2.38})$ using the algorithm of Coppersmith and Winograd~\cite{CW:1990}.

    Once the equations have been solved, and the minimising values of $\alpha_k$ calculated, they are then substituted into the expression for $D$ in Equation~(\ref{eq:polyLtwoWild}).   To calculate the final values $D[i]$ we require $O(r)$ cross-correlations to be computed as well as $O(r^2)$ products of vectors of length $m$.  The overall time complexity is therefore $O(rn\log{m} +   r^{2.38}n + r^2 m)$.
\end{proof}

This method is of particular relevance for low degree polynomials, or at least polynomials whose degree is less than the number of distinct values in the pattern.  However, if the degree $r$ is greater than the number of distinct values in the pattern, then there exists a suitable polynomial $f$ for any  mapping we should choose.  This gives us a straightforward $O(nm)$ time solution by considering each position of the pattern in the text independently and ignoring any values aligned with wildcards in either the pattern or text. For each such position we need only set $f(P[j])$ to be the mean of the values in the text that align with a value equal to $P[j]$ in the pattern.

\section{Lower bounds for Hamming distance}\label{sec:3SUM}

In this section we will show that no $O(nm^{1-\epsilon})$ time algorithm can exist for neither \sHam or \ssHam conditional on the hardness of the classic \threeSUM problem. One formulation of the \threeSUM problem is given below.

\begin{definition}[\threeSUM]
    \label{def:threeSUM}
    Given a set of $s$ positive integers, determine whether there are three elements $a,b,c$ in the set such that $a+b=c$.
\end{definition}

The \threeSUM problem can be solved in $O(s^2)$ time and it is a long standing conjecture that this is essentially the best possible. The problem has been extensively discussed in the literature, where Gajentaan and Overmars~\cite{GO:1995} were the first to introduce the concept of \threeSUM-hardness (see definition below) to show that a wide range of problems in computational geometry are at least as hard as the \threeSUM problem. One example is the \geombase problem, defined below, which we will use in one of our reductions in this section. See~\cite{King:2004} for a survey of problems from computational geometry whose hardness relies on that of \threeSUM.

\begin{definition}[\geombase]
    \label{def:geombase}
    Given a set of $s$ points with integer coordinates on three horizontal lines $y=0$, $y=1$ and $y=2$, determine whether there exists a non-horizontal line containing three of the points.
\end{definition}

Although an $\tilde{\Omega}(s^2)$ lower bound for \threeSUM is only conjectured, it has been shown that under certain restricted models of computation, $\Omega(s^2)$ is a true lower bound (see~\cite{ES:1995,Erickson:1999,Erickson:convex:1999}). Under models that allow more direct manipulation of numbers instead of just real arithmetic, such as the word-RAM model, an almost $\log^2{s}$ factor improvement to the standard $O(s^2)$ solution has been shown to be possible under the Las Vegas model of randomisation (see~\cite{BDP:2005}). Nevertheless, a \threeSUM-hardness result for a problem is a strong indication that finding an $O(s^{2-\epsilon})$ time solution is going to be a challenging task.

Before we show that \sHam and \ssHam are both \threeSUM-hard, we provide a brief but formal discussion about reductions and define \threeSUM-hardness.

\subsection{\threeSUM reductions}

Following the definitions of~\cite{GO:1995} where \threeSUM-hardness was first introduced, we say that a problem $A$ is \emph{$g(s)$-solvable using a problem $B$} if and only if every instance of $A$ of size $s$ can be solved using a constant number of instances of $B$ of at most $O(s)$ size and $O(g(s))$ additional time. We denote this as $A\freduce{g(s)}B$. When $g(s)$ is sufficiently small, lower bounds for $A$ carry over to $B$. A problem $B$ is \emph{\threeSUM-hard} if $\threeSUM\freduce{g(s)} B$ and $g(s)=o(s^{2-\epsilon})$ for some constant $\epsilon>0$. In the definition of \threeSUM-hardness of~\cite{GO:1995}, the requirement was that $g(s)=o(s^2)$, however, to scale with more powerful models of computation, we require that $g(s)=o(s^{2-\epsilon})$. If $A\freduce{g(s)}B$ and $B\freduce{g(s)}A$ then we say that $A$ and $B$ are \emph{$g(s)$-equivalent}.%, denoted $A\fequiv{g(s)} B$.

In the following section we will show that $\threeSUM\freduce{s\log s}\sHam$ where the instance size of \sHam is a text of length $n=5s$ and a pattern of length $m=3s$.

In the literature there are a variety of definitions of the \threeSUM problem. They differ only slightly in their formulations and are all equivalent. One common definition, used as the ``base problem'' in~\cite{GO:1995}, is formulated as follows. Given a set of $s$ integers, determine whether there are three elements $a,b,c$ in the set such that $a+b+c=0$. Without too much work, one can show that this definition is $O(s)$-equivalent with Definition~\ref{def:threeSUM} of \threeSUM above (small modifications of the proof of Theorem~3.1 in~\cite{GO:1995} can be used to prove this). Further, it was shown in~\cite{GO:1995} that \geombase is $O(s)$-equivalent to \threeSUM.

\subsection{\threeSUM-hardness of \sHam}

In this section we show that \sHam is \threeSUM-hard.

\begin{lemma}
    \label{lem:sHamHard}
    $\threeSUM\freduce{s\log s}\sHam$ where the instance size of \sHam is a text of length $5s$ and a pattern of length $3s$.
\end{lemma}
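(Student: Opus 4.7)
The plan is to construct, in $O(s\log s)$ time from a given \threeSUM instance $S = \{s_1, \ldots, s_s\}$, a text $T$ of length $5s$ and a pattern $P$ of length $3s$ such that $S$ admits a triple $a + b = c$ with $a, b, c \in S$ if and only if $\min_i \DsHam(i)$ falls below a specific, computable threshold. Since sorting takes $O(s \log s)$ and the remainder of the construction will be linear in $s$, this establishes $\threeSUM \freduce{s \log s} \sHam$ at the stated instance sizes.

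First I would sort $S$ and pick a ``guard'' integer $M$ exceeding $3\max(S)$. The role of $M$ is to act as a separator: if positions of $P$ and $T$ carry values of the form $kM + r$ with $r$ bounded by $\max(S)$, then pairs originating from different $k$ cannot collide under any shift $\alpha$ lying in the range of $S$, so coincidences between unrelated regions are ruled out. I would build $P$ as a concatenation of three length-$s$ blocks and $T$ as a concatenation of five length-$s$ blocks, with each block consisting of the sorted sequence of $S$ offset by a carefully chosen multiple of $M$.

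The key identity driving the reduction is $\DsHam(i) = m - f_i$, where $f_i$ is the maximum multiplicity of any value in the multiset $\{T[i+j] - P[j] : j = 0, \ldots, m-1\}$. Thus the two degrees of freedom in \sHam---the alignment $i$ and the shift $\alpha$---can be used to encode a pair of choices from $S$. The blocks would be arranged so that at a designated alignment $i^\star$, at which the ``meaningful'' block of $P$ is aligned with the ``meaningful'' portion of $T$, choosing $\alpha$ equal to a specific $s_k$ produces an additional match at pattern position $j$ precisely when $s_j + s_k$ lies in $S$. A \threeSUM witness $a + b = c$ thus lowers $\DsHam(i^\star)$ below the no-witness baseline, and the guard value $M$ ensures that no other alignment or shift can match this reduction without a genuine witness.

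The main obstacle will be engineering the encoding so that the ``only if'' direction is tight: the block offsets must be chosen so that mismatches between blocks of $P$ and unintended blocks of $T$ produce difference values of the form $\pm kM + (\text{small})$ which can never coincide with the on-mechanism difference value. A pigeonhole argument then forces the unique high-multiplicity value in the difference multiset to correspond to the intended alignment and shift, so that dropping below the threshold is equivalent to exhibiting a genuine triple in $S$. Once this is verified, all construction steps run in $O(s)$ beyond the initial sort, giving the total $O(s\log s)$ time as required.
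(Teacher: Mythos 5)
Your high-level plan (sort $S$, build $T$ and $P$ from length-$s$ blocks, use the identity $\DsHam(i)=m-\max_\alpha|\{j: T[i+j]-P[j]=\alpha\}|$, and separate a with-witness threshold from a no-witness baseline) matches the paper's strategy, but the concrete gadget you describe does not implement the mechanism you claim. If every block of both $P$ and $T$ is the sorted sequence of $S$ shifted by a multiple of $M$, then at any alignment the difference $T[i+j]-P[j]$ has the form $x_{\sigma(j)}-x_j+(\text{const})$, and a value attaining multiplicity two means two index pairs share the same difference --- a ``repeated difference'' condition, not $a+b=c$. The statement that ``choosing $\alpha=s_k$ produces an additional match at position $j$ precisely when $s_j+s_k\in S$'' needs two distinct ingredients that your construction omits: something that \emph{pins the shift to be an element of $S$} (the paper uses a block of zeros in $P$ sitting over a copy of the sorted $x$-values in $T$, so that a zero matching forces $\alpha=x_b$ for some $b$), and something that lets a \emph{single} alignment-plus-shift simultaneously realise the pair $(a,c)$ (the paper reverses the order of $S$ inside $P$, so that each alignment corresponds to a fixed ``diagonal'' of pairs and the shift selects the pair on it). Without these, neither direction of the equivalence goes through, and a single ``designated alignment $i^\star$'' cannot range over all $\Theta(s^2)$ candidate pairs in any case.

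The second gap is in your guard argument. You argue that $M>3\max(S)$ prevents collisions ``under any shift $\alpha$ lying in the range of $S$,'' but \sHam minimises over \emph{all} shifts, not just small ones. With all blocks equal to sorted $S$ plus distinct multiples of $M$, there are alignments at which an entire block of $P$ sits over an identically ordered block of $T$, and the shift $\alpha=kM$ then yields $s$ simultaneous matches --- a catastrophic false positive with no \threeSUM witness. The paper sidesteps exactly this by making only one block of $P$ non-constant, reversing its order relative to the increasing blocks of $T$ (so at most one match within that block under any shift), and filling the rest of $T$ with fresh, pairwise-distinct large values $y_1<y_2<\cdots$ so that the zero blocks of $P$ always face distinct symbols (so at most one zero can match under any shift). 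You would need analogous monotonicity/distinctness arguments, quantified over all integer shifts, to close the ``only if'' direction; the guard modulus alone does not do it.
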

\begin{proof}
    Let the set $S$ be an instance of \threeSUM of size $s=|S|$. First we sort all elements of $S$ so that $S=\{x_1,\dots,x_s\}$ where $x_1< x_2<\cdots< x_s$. Let $y_1=2x_s+1$ and for $i\in \{2,\dots,2s\}$, let $y_i=y_{i-1}+1$. Thus, $x_s<y_1<\cdots< y_{2s}$. We define the following $s$-length strings over the alphabet $\{x_1,\dots,x_s\}\cup\{y_1,\dots,y_{2s}\}\cup\{0\}$.
    \begin{align*}
        S_0 &= 0 \, 0 \, \cdots \, 0 \quad \textup{($s$ zeros)} &\qquad S_3 &= y_{s+1} \, y_{s+2} \cdots \, y_{2s} \\
        S_1 &= x_1 \, x_2 \cdots \, x_s &\qquad S_4 &= x_s \, x_{s-1} \cdots \, x_1 \\
        S_2 &= y_1 \, y_2 \cdots \, y_s &\qquad ~
    \end{align*}
    We now construct an instance of \sHam specified by
    \begin{align*}
        T = S_0 \concat S_1 \concat S_2 \concat S_1 \concat S_3 \qquad \textup{and} \qquad
        P = S_4 \concat S_0 \concat S_0\,.
    \end{align*}
    The text $T$ has length $n=5s$ and the pattern $P$ has length $m=3s$. First we show that if there are elements $a,b,c\in S$ such that $a+b=c$ then there is a position $i$ such that the shift-normalised Hamming distance between $P$ and $T[i\upto i+m-1]$ is at most $m-2$. We will then show that if no such three elements exist then the shift-normalised Hamming distance between $P$ and every $m$-length substring of $T$ is strictly greater than $m-2$.

    As an illustrative example, suppose that $S$ contains seven elements and suppose that $x_4+x_3=x_6$. Consider the alignment of $P$ and $T$ where $x_4$ in $P$ is aligned with $x_6$ in $T$:
    \begin{center}
        \footnotesize
        \cbS{T:}%
        \cbz\cbz\cbz\cbz\cbz\cbz\cbz%
        \cbx{1}\cbx{2}\cbx{3}\cbx{4}\cbx{5}\CBx{6}\cbx{7}%
        \cby{1}\cby{2}\cby{3}\cby{4}\cby{5}\cby{6}\cby{7}%
        \cbx{1}\cbx{2}\CBx{3}\cbx{4}\cbx{5}\cbx{6}\cbx{7}%
        \cby{8}\cby{9}\cby{\cbshrink 1\!0}\cby{\cbshrink 1\!1}\cby{\cbshrink 1\!2}\cby{\cbshrink 1\!3}\cby{\cbshrink 1\!4}\\
        \cbS{P:}%
        \cb{~}\cb{~}\cb{~}\cb{~}\cb{~}\cb{~}\cb{~}%
        \cb{~}\cb{~}%
        \cbx{7}\cbx{6}\cbx{5}\CBx{4}\cbx{3}\cbx{2}\cbx{1}%
        \cbz\cbz\cbz\cbz\cbz\cbz\cbz%
        \CBz\cbz\cbz\cbz\cbz\cbz\cbz%
        \cb{~}\cb{~}\cb{~}\cb{~}\cb{~}
    \end{center}
    We observe that shifting the pattern by $x_3$ will induce two matches, marked with the squares above. Thus, the shift-normalised Hamming distance is at most $m-2$ (in fact, it is exactly $m-2$). It should be easy to see how this generalises to any size of $S$ and any three elements $a,b,c\in S$ such that $a+b=c$. Namely, the alignment in which $a$ is aligned with $c$ has Hamming distance at most $m-2$ since there must also be a match at the position where 0 aligned with $b$. The construction of $P$ and $T$ ensures that there is always an alignment that captures these matches.

    Now suppose there are no elements $a,b,c\in S$ such that $a+b=c$. Consider a fixed alignment of $P$ and $T$. We will show that there can be at most one match under any shift. By construction of $P$ and $T$, the zeros in $P$ are all aligned with distinct symbols in $T$. Hence for any shift, at most one of these zeros can be involved in a match. The non-zero symbols of $P$ (i.e.,~the $s$-length prefix of $P$) appear in strictly decreasing order and are aligned with an $s$-length substring of $T$ whose elements appear in non-decreasing order. Therefore, under any shift, at most one of the non-zero symbols in $P$ can be involved in a match. It remains to show that there is no shift such that both a zero and a non-zero symbol in $P$ are simultaneously involved in a match. First, we observe that if there is a match between a $0$ in $P$ and some $y_j$ in $T$ then there can be no other match as every non-zero symbol in $P$ is aligned with a value that is less than $y_j$. Suppose therefore that there is a match between a $0$ in $P$ and some $x_j$ in $T$ (i.e.,~the shift is $x_j$). We need to consider three possible cases: there is also a match that involves some $x_k$ in $P$ aligned with either (i)~a~$0$ in $T$, (ii)~some $y_\ell$ in $T$ or (iii)~some $x_\ell$ in $T$. In case~(i) the shift must be negative, hence is not compatible with the shift $x_j$. In case~(ii) we can see that the shift must be greater that $x_s$ (the largest elements in the set $S$), hence is not compatible with the shift $x_j$. In case~(iii) we have that $x_k+x_j=x_\ell$, which contradicts the assumption that there are no elements $a,b,c\in S$ such that $a+b=c$. Thus, the shift-normalised Hamming distance is at least $m-1$ for any alignment of $P$ and $T$.

    Finally, we observe that the most time consuming part of the reduction is the sorting of $S$ which could take $O(s\log s)$ time. This concludes the proof.
\end{proof}

\begin{theorem}
    \label{thm:sHamLower}
    \sHam has no $O(nm^{1-\epsilon})$ time algorithm, for any $\epsilon>0$, conditional on the hardness of the \threeSUM problem.
\end{theorem}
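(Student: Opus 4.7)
The plan is to derive the theorem as an essentially immediate corollary of Lemma~\ref{lem:sHamHard}. Assume, for contradiction, that for some fixed $\epsilon>0$ there exists an algorithm solving \sHam in time $O(nm^{1-\epsilon})$. Given a \threeSUM instance $S$ of size $s$, I would first invoke the reduction already constructed in the lemma, producing a text of length $n=5s$ and a pattern of length $m=3s$ in $O(s\log s)$ time.

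Next, I would run the hypothetical \sHam algorithm on this instance. By the assumed time bound, this takes $O(n\,m^{1-\epsilon})=O\bigl(s\cdot s^{1-\epsilon}\bigr)=O(s^{2-\epsilon})$ time. Having computed $\DsHam(i)$ for every alignment $i$, I would then scan the output array in $O(n)=O(s)$ additional time and declare the \threeSUM instance to be a ``yes''-instance if and only if some position $i$ satisfies $\DsHam(i)\le m-2$. The correctness of this test is exactly what Lemma~\ref{lem:sHamHard} establishes: the existence of $a,b,c\in S$ with $a+b=c$ is equivalent to the existence of an alignment whose shift-normalised Hamming distance is at most $m-2$, while in the negative case every alignment has distance at least $m-1$.

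The total running time for deciding the \threeSUM instance would therefore be $O(s\log s + s^{2-\epsilon})$, which is $O(s^{2-\epsilon'})$ for any $\epsilon'<\epsilon$ (for instance, $\epsilon'=\epsilon/2$ absorbs the $s\log s$ term once $s$ is large enough). This contradicts the standing conjecture that \threeSUM admits no $O(s^{2-\epsilon'})$ algorithm, and since the reduction fits the $g(s)=s\log s = o(s^{2-\epsilon'})$ requirement in the definition of \threeSUM-hardness, the conclusion follows.

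There is no substantive obstacle here beyond this bookkeeping; the real content — the reduction that is simultaneously correct, linear-size and fast to construct — is already handled in Lemma~\ref{lem:sHamHard}. The only minor point requiring care is the parameter juggling to ensure that the $s\log s$ overhead from the sorting step does not eat up the exponent savings, which is why we pass from $\epsilon$ to a slightly smaller $\epsilon'$ when invoking the \threeSUM conjecture.
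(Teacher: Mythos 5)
Your proposal is correct and follows essentially the same route as the paper: invoke the reduction of Lemma~\ref{lem:sHamHard}, run the hypothetical $O(nm^{1-\epsilon})$ algorithm on the resulting instance with $n=5s$ and $m=3s$, and conclude a contradiction with the \threeSUM conjecture. The extra care you take over the $s\log s$ preprocessing term and the passage to a slightly smaller exponent $\epsilon'$ is harmless bookkeeping that the paper leaves implicit.
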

\begin{proof}
    Given a \threeSUM instance of size $s$, by Lemma~\ref{lem:sHamHard} we construct a \sHam instance of size $n=5s$ and $m=3s$ in $O(s\log s)$ time. If \sHam has an $O(nm^{1-\epsilon})$ time algorithm then \threeSUM can be solved in $O(s^{2-\epsilon})$ time.
\end{proof}

%The $\tilde{\Omega}(nm)$ lower bounds is tight to within logarithmic factors as \sHam can be solved in $\tilde{O}(nm)$ time~\cite{MNU:2005}. See Section~\ref{sec:unbounded} for details.

Notice that \sHam has an $O(nm\log m)$ time solution~\cite{MNU:2005}. See Section~\ref{sec:unbounded} for details.

\subsection{\threeSUM-hardness of \ssHam}

In this section we show that \ssHam is \threeSUM-hard.

\begin{lemma}
    \label{lem:ssHamHard}
    $\threeSUM\freduce{s}\ssHam$ where the instance size of \sHam is a pattern and text of length $s$ each.
\end{lemma}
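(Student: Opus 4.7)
The plan is to reduce from the base formulation of \threeSUM, which the preceding discussion (following \cite{GO:1995}) shows to be $O(s)$-equivalent to Definition~\ref{def:threeSUM}: given a set $X=\{x_1,\ldots,x_s\}$ of integers, decide whether three distinct elements sum to zero. Given such an instance, I would set $P[j]=x_{j+1}$ and $T[j]=x_{j+1}^3$ for $j\in\{0,\ldots,s-1\}$. The construction takes $O(s)$ time, and since $n=m=s$ the resulting \ssHam instance has only the single alignment $i=0$, so the reduction merely has to read off $\DssHam(0)$.

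Next I would translate $\DssHam(0)$ into a geometric question on the cubic. At the single alignment the pattern-text pairs are $\{(P[j],T[j])\}_j=\{(x,x^3):x\in X\}$, i.e.\ $s$ distinct points on the curve $y=x^3$. For any fixed $(\alpha,\beta)$ the matching positions are exactly those indices whose point $(x,x^3)$ lies on the non-vertical line $y=\alpha+\beta x$, so $\DssHam(0)=s-M$ where $M$ is the largest number of these points on a single non-vertical line. Since the $x$-coordinates are distinct, a horizontal line ($\beta=0$) meets the set in at most one point, so the degenerate case requires no separate argument.

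The key algebraic step is the standard cubic collinearity identity: three distinct reals $a,b,c$ yield collinear points $(a,a^3),(b,b^3),(c,c^3)$ if and only if $a+b+c=0$. This falls out of equating the two slopes $(b^3-a^3)/(b-a)=a^2+ab+b^2$ and $(c^3-a^3)/(c-a)=a^2+ac+c^2$, whose difference factors as $(b-c)(a+b+c)$. Combined with the previous paragraph, $M\geq 3$ iff $X$ contains three distinct elements summing to zero, so $\DssHam(0)\leq s-3$ exactly decides the base \threeSUM instance. The whole reduction therefore runs in $O(s)$ time, witnessing $\threeSUM\freduce{s}\ssHam$.

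The main obstacle I anticipate is ensuring that no unintended geometric configuration can accidentally produce three collinear points and corrupt the reduction. The cubic is chosen precisely to avoid this: on the parabola $y=x^2$ no three distinct points are ever collinear, so a quadratic encoding would give $\DssHam(0)\geq s-2$ for every instance and fail to separate yes from no, whereas the cubic is the lowest-degree polynomial whose three-point collinearity condition coincides with a \threeSUM-style equation. Degenerate lines (horizontal and vertical) are automatically benign because the $x$-coordinates are distinct integers, and committing from the outset to the distinct-element base formulation of \threeSUM sidesteps any subsidiary issue with repeated indices such as $2a+b=0$.
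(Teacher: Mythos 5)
Your proof is correct, but it takes a genuinely different route from the paper. The paper reduces from \geombase (which it cites as $O(s)$-equivalent to \threeSUM via~\cite{GO:1995}): after transposing the three horizontal lines to the vertical lines $x=0,1,2$, it sets $P$ to the $x$-coordinates and $T$ to the $y$-coordinates, so that $\DssHam(0)\leq s-3$ iff some non-vertical line contains three of the points. You instead inline the classic \threeSUM-to-collinearity reduction directly, placing the input set on the cubic $y=x^3$ and using the identity that distinct $(a,a^3),(b,b^3),(c,c^3)$ are collinear iff $a+b+c=0$; both arguments then converge on the same final observation, namely that the single-alignment value $\DssHam(0)$ equals $s$ minus the maximum number of the constructed points on a non-vertical line. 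What your version buys is self-containedness (no \geombase black box, and the collinearity identity is verified explicitly); what it costs is (a)~a reliance on the \emph{distinct-triple} formulation of \threeSUM, whose $O(s)$-equivalence to Definition~\ref{def:threeSUM} you assert but do not supply (solutions of the form $2a+b=0$ are invisible to your encoding, and handling them requires a separate, if standard, reduction), and (b)~text values that are cubes of the input rather than the paper's tiny pattern alphabet $\{0,1,2\}$, a harmless but real difference in the size of the numbers produced. One small slip: a horizontal line $y=\alpha$ meets your point set in at most one point because $x\mapsto x^3$ is injective, not because the $x$-coordinates are distinct; the conclusion is right but the stated reason is not the operative one.
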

\begin{proof}
    We reduce from the \geombase problem which is $O(s)$-equivalent to \threeSUM. Before we describe the reduction we adopt a formulation of the \geombase problem that differs slightly in notation. Instead of insisting on the points being on the horizontal lines $y=0$, $y=1$ and $y=2$, we assume that the points are on the vertical lines $x=0$, $x=1$ and $x=2$ and we want to determine whether there is a (non-vertical) line containing three points. Under this formulation, let $S$ be an instance of \geombase that contains the integer points $(x_1,y_1), (x_2,y_2), \dots, (x_s,y_s)$, where every $x_j\in\{0,1,2\}$.

    We construct an instance of \ssHam that is specified by the text $T=y_1\,y_2\cdots y_s$ and the pattern $P=x_1\,x_2\cdots x_s$, both of length $s$. It should now be clear that \ssHam returns the shift-and-scale normalised Hamming distance $s-3$ (for the only alignment of $P$ and $T$) if and only if there are two values $\alpha$ and $\beta$ such that $\beta x_j + \alpha = y_j$ for three distinct positions $j$, which is equivalent to fitting a line through three points. Note that we minimise $\alpha$ and $\beta$ over the rationals, and any line going through three points is indeed specified by rational values of $\alpha$ and $\beta$. Since the reduction takes linear time, we have proved the lemma.
\end{proof}

\begin{theorem}
    \label{thm:ssHamLower}
    \ssHam has no $O(nm^{1-\epsilon})$ algorithm, for any $\epsilon>0$, conditional on the hardness of the \threeSUM problem.
\end{theorem}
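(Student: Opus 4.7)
The plan is a direct translation of the argument used for Theorem~\ref{thm:sHamLower}: combine the reduction provided by Lemma~\ref{lem:ssHamHard} with a hypothetical fast algorithm for \ssHam to derive a subquadratic algorithm for \threeSUM, contradicting the standing conjecture. All of the structural work, namely routing \threeSUM through \geombase and then into a shift-scale Hamming instance whose fit-a-line-through-three-points geometry matches the \geombase predicate, has already been carried out in Lemma~\ref{lem:ssHamHard}, so only the timing bookkeeping remains.

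Concretely, I would take an arbitrary \threeSUM instance of size $s$ and first apply the $O(s)$-equivalence between \threeSUM and \geombase noted earlier in this section (from~\cite{GO:1995}). Feeding the resulting \geombase instance into Lemma~\ref{lem:ssHamHard} yields, in a further $O(s)$ time, a \ssHam instance with $n=m=s$. Now, assuming for contradiction that an $O(nm^{1-\epsilon})$ algorithm for \ssHam existed for some $\epsilon>0$, running it on this instance would cost $O(s\cdot s^{1-\epsilon})=O(s^{2-\epsilon})$. Adding the $O(s)$ reduction overhead, the original \threeSUM instance would be settled in $O(s^{2-\epsilon})$ time, contradicting the \threeSUM hardness assumption adopted at the start of Section~\ref{sec:3SUM}.

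There is essentially no obstacle to surmount: the combinatorial content sits entirely inside Lemma~\ref{lem:ssHamHard}, and what is left is merely verifying that $nm^{1-\epsilon}=s^{2-\epsilon}$ when $n=m=s$, and that a linear-time reduction overhead is absorbed by this bound for any fixed $\epsilon\in(0,1)$. The theorem therefore drops out as an immediate corollary of the lemma, exactly as Theorem~\ref{thm:sHamLower} dropped out of Lemma~\ref{lem:sHamHard}, and the proof should be only a few lines long.
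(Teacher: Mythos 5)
Your proposal matches the paper's proof exactly: invoke Lemma~\ref{lem:ssHamHard} to build a \ssHam instance with $n=m=s$ in $O(s)$ time, then observe that an $O(nm^{1-\epsilon})$ algorithm would solve \threeSUM in $O(s^{2-\epsilon})$ time, contradicting the hardness assumption. The argument is correct and complete.
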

\begin{proof}
    Given a \threeSUM instance of size $s$, by Lemma~\ref{lem:ssHamHard} we construct a \ssHam instance of size $n=m=s$ in $O(s)$ time. If \ssHam has an $O(nm^{1-\epsilon})$ algorithm then \threeSUM can be solved in $O(s^{2-\epsilon})$ time.
\end{proof}

%In contrast to \sHam, which is \threeSUM-hard for the case where the pattern length is a constant fraction of the text length, \ssHam remains \threeSUM-hard even when the pattern and text are of the same length.  In both cases the lower bounds are tight to within logarithmic factors as both problems can be solved in $\tilde{O}(nm)$ time straightforwardly.

\section{Normalised $k$-mismatch under shifts}\label{sec:TIk}

In this section we consider two versions of the normalised $k$-mismatch problem under shifts, defined as Problems~\ref{prob:skMismatch} and~\ref{prob:skDecision} in the introduction. Both problems are parameterised by an integer $k$. In the first problem, \skMismatch, the output is the shift-normalised Hamming distance between $P$ and $T$ at every position for which the distance is $k$ or less. Where the distance is larger than $k$, only $k+1$ is outputted. Recall from the introduction that the shift-normalised Hamming distance between $P$ and $T[i\upto i+m-1]$ is denoted $\DsHam(i)$ and defined by
\begin{equation*}
        \dHam(i) \eqdef \min_{\alpha} \big| \Set{j | \alpha + P[j] \neq T[i+j]} \big| \,.
\end{equation*}
In Section~\ref{sec:detkmis} we give a deterministic algorithm that solves \skMismatch in $O(nk \log k)$ time.

In Section~\ref{sec:randkmis} we consider the the second version of shift-normalised $k$-mismatch, \skDecision, which unlike the previous problem only indicates with yes or no whether the shift-normalised Hamming distance is $k$ or less. We give a randomised solution to this decision problem with the improved running time $O(cn \sqrt{k \log k} \log n)$. The parameter $c$ is a constant that can be chosen arbitrarily to fine tune the error probability. Namely, the probability that our algorithm outputs the correct answer at every alignment is at least $(1-1/n^c)$. The errors are one-sided such the algorithm will never miss reporting an alignment for which the shift-normalised Hamming distance is indeed $k$ or less. Our algorithm requires that $k<\sqrt{m/6}$, hence it is suited to situations where the locations of text substrings similar to the pattern are required but the distances themselves are not needed.

\subsection{The unbounded case}\label{sec:unbounded}

In~\cite{MNU:2005}, M\"{a}kinen, Navarro and Ukkonen gave an $O(nm \log m)$ time algorithm for the shift-normalised Hamming distance problem, \sHam, which by definition solves the bounded, $k$-mismatch variant in  $O(nm \log m)$ time also. We briefly recap their method by way of an introduction. First observe that the maximum number of matches for any alignment is exactly
\begin{equation*}
    m - \dHam(i)  =  \max_\alpha \Set{ j | T[{i+j}]-P[j]=\alpha }\,.
\end{equation*}
For each alignment $i$, this value can be obtained by creating an $m$-length array $A_i$, which we refer to as the \emph{shift array}, defined by
\begin{equation}
    \label{eq:shiftarray}
    A_i[j]= T[{i+j}]-P[j]
\end{equation}
for all $j\in [m]$. This shift array is then sorted to find the most frequent value, which is the $\alpha$ that minimises $\dHam(i)$. The number of times it occurs is $m-\dHam(i)$. Computing this requires $O(m \log m)$ time per alignment and hence $O(nm \log m)$ time overall. In the next section we will reconsider $A_i$ and demonstrate that it can be run-length encoded in $O(k)$ runs whenever $\dHam\leq k$.

\subsection{A deterministic solution}\label{sec:detkmis}

The deterministic algorithm makes use of the notion of difference strings which were introduced in~\cite{LU:2000} and are defined as follows.

\begin{definition}
    \label{dfn:diffstr}
    Let $S$ be a string of length $s$. The \emph{difference string} of $S$, denoted $S_{\delta}$, is defined by
    \begin{equation*}
        S_\delta[j]=S[j+1]-S[j]
    \end{equation*}
    for all $j\in [s-1]$. The length of $S_{\delta}$ is $s-1$.
\end{definition}

We will also make use of a generalisation of the difference string when we present our randomised algorithm in Section~\ref{sec:randkmis}. The core of our deterministic shift-normalised $k$-mismatch algorithm is the relationship between the number of mismatches between $P_\delta$ and $T_{\delta}[i \upto i+m-2]$ and the value of $\dHam(i)$. We begin in Lemma~\ref{lem:TI-up} below by showing that if $\dHam(i)$ is small then the number of mismatches between the difference strings $P_\delta$ and $T_\delta$ is also small. In~\cite{MNU:2005} a related result was used to reduce the shift-normalised exact matching problem to the conventional exact matching problem. Specifically, they observed that in the special case that $k=0$, the implication becomes an equivalence, i.e., $\dHam(i)=0$ if and only if $P_\delta=T_{\delta}[i\upto i+m-2]$. Unfortunately, this is not the case in general.

\begin{lemma}
    \label{lem:TI-up}
    Let $P$ be a pattern and $T$ a text. For all $i$,
    \begin{equation*}
       \dHam(i)\leq k \,\implies\, \ham\big(P_\delta,\, T_{\delta}[i\upto (i+m-2)]\big) \leq 2k \,.
    \end{equation*}
\end{lemma}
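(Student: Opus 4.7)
The plan is to exploit the optimal shift $\widehat{\alpha}$ attaining $\dHam(i)$ and to show that each mismatch contributed by this shift can ``damage'' at most two positions in the difference strings.

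First, fix the alignment $i$ and let $\widehat{\alpha}$ be a value of $\alpha$ achieving the minimum in the definition of $\dHam(i)$. Partition the index set $[m]$ into the \emph{good} set $G=\{j\in [m] \mid \widehat{\alpha}+P[j]=T[i+j]\}$ and the \emph{bad} set $B=[m]\setminus G$. By hypothesis, $|B|=\dHam(i)\leq k$.

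The key observation is a telescoping identity for the difference of the difference strings. For any $j\in[m-1]$,
\begin{equation*}
    P_\delta[j]-T_\delta[i+j] \,=\, \big(P[j+1]-T[i+j+1]\big)-\big(P[j]-T[i+j]\big) \,.
\end{equation*}
If both $j\in G$ and $j+1\in G$, then $P[j]-T[i+j]=-\widehat{\alpha}=P[j+1]-T[i+j+1]$, so the right-hand side vanishes and $P_\delta[j]=T_\delta[i+j]$. Equivalently, a mismatch in the difference strings at position $j\in[m-1]$ forces $j\in B$ or $j+1\in B$.

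It remains to count such positions. Each $b\in B$ is responsible for mismatches at positions $b$ (through the requirement $j=b$) and $b-1$ (through the requirement $j+1=b$), provided these positions lie in $[m-1]$. Hence the number of mismatch positions in the difference strings is at most $2|B|\leq 2k$, which is exactly the claim. The argument is short and I expect no real obstacle; the only thing to double-check is the handling of endpoints, but since we only upper-bound the count this is automatic.
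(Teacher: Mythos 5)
Your proof is correct and follows essentially the same route as the paper's: both fix an optimal shift, observe that a mismatch between $P_\delta[j]$ and $T_\delta[i+j]$ forces a mismatch under that shift at position $j$ or $j+1$, and bound the count by $2k$. The only difference is cosmetic — you count "each bad position damages at most two difference positions" while the paper counts "at most $2k$ positions $j$ have $j$ or $j+1$ bad" — and both are valid.
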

\begin{proof}
    Let $i$ be such that $\dHam(i)\leq k$ and therefore there exists an $\alpha$ such that for at most $k$ distinct position $j\in [m]$ we have that $P[j]+\alpha \neq T[i+j]$. Further, at most $2k$ distinct positions $j\in [m-1]$ have either $P[j]+\alpha \neq T[i+j]$ or $P[j+1]+\alpha \neq T[i+j+1]$. This implies that there are at least $(m-1)-2k$ distinct positions $j\in [m-1]$ such that $P[j]+\alpha = T[i+j]$ and $P[j+1]+\alpha = T[i+j+1]$. By rearranging these equations, for any such $j$ we have that $P[j+1]-P[j]=T[i+j+1]-T[i+j]$ and hence by Definition~\ref{dfn:diffstr}, $P_\delta[j]=T_\delta[1+j]$. As required there are at most $2k$ mismatches (recall $|P_\delta|=m-1$).
\end{proof}

Lemma~\ref{lem:TI-up} suggests the following strategy. First we find the leftmost up to $2k+1$ mismatches between $P_\delta$ and $T_{\delta}[i\upto i+m-2]$ at each alignment $i$. By Lemma~\ref{lem:TI-up} we can disregard any alignments with more than $2k$ mismatches. Finally we use the locations of these mismatches to infer $\dHam(i)$ at the remaining alignments.

The first step can be done using any $k$-mismatch (strictly $2k$-mismatch) algorithm which returns the locations of the mismatches. The well-known `kangaroo' method of~\cite{LV:1986a} achieves this in optimal $O(nk)$ time. The method is so named as it uses longest common extensions to `hop' between mismatches in constant time. The discarding phase is trivial and therefore we only focus on computing $\dHam(i)$ from the locations of the (at most $2k$) mismatches between $P_\delta$ and $T_{\delta}[i\upto (i+m-2)]$, where $i$ is an arbitrary non-discarded alignment.

Recall from Section~\ref{sec:unbounded} the definition of the shift array $A_i$ in Equation~(\ref{eq:shiftarray}), and recall that the value of $m-\dHam(i)$ is the number of occurrences of the most frequent entry in $A_i$. We will now use the locations of the mismatches between $P_\delta$ and $T_{\delta}[i\upto i+m-2]$ to obtain a run-length encoded version of $A_i$ containing $O(k)$ runs. The key property we require is given in Lemma~\ref{lem:run} which states that a matching substring in $P_\delta$ and $T_{\delta}[i\upto i+m-2]$ corresponds to a run (a substring of equal values) in $A_i$. This immediately implies that $A_i$ can be decomposed into at most $4k+1$ runs. Specifically, one run of length~$1$ for each mismatch and an additional run for each stretch between mismatches.

\begin{lemma}
    \label{lem:run}
    If $P_\delta[\ell \upto r] = T_\delta[(i+\ell) \upto (i+r)]$ then $A_i[j]=A_i[\ell]$ for all $j \in\{\ell,\ldots,r\}$.
\end{lemma}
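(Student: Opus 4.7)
The plan is to unfold the three definitions in play and observe that the hypothesis of the lemma translates, term by term, into an equality between consecutive entries of the shift array $A_i$. From that point the conclusion follows by a one-line induction (equivalently, a telescoping sum), so no nontrivial combinatorial or algorithmic argument is required.

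Concretely, I would first recall that $P_\delta[j]=P[j+1]-P[j]$, $T_\delta[j]=T[j+1]-T[j]$, and $A_i[j]=T[i+j]-P[j]$. The assumption $P_\delta[\ell\upto r] = T_\delta[(i+\ell)\upto(i+r)]$ says that for every $j\in\{\ell,\ldots,r\}$,
\begin{equation*}
    P[j+1]-P[j] \;=\; T[i+j+1]-T[i+j].
\end{equation*}
Rearranging this single identity gives $T[i+j+1]-P[j+1] = T[i+j]-P[j]$, i.e.\ $A_i[j+1]=A_i[j]$ for each such $j$. This is the key observation.

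From here, I would simply induct on $j$ starting at $j=\ell$: the base case is trivial, and the inductive step is exactly the equality $A_i[j+1]=A_i[j]$ just derived. This yields $A_i[j]=A_i[\ell]$ for all $j\in\{\ell,\ldots,r+1\}$, which is in fact slightly stronger than what the lemma claims (the lemma restricts to $\{\ell,\ldots,r\}$, but our argument naturally also covers the index $r+1$). Restricting to $\{\ell,\ldots,r\}$ finishes the proof.

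There is no real obstacle; the only thing to be mindful of is keeping the index ranges straight, since $P_\delta$ has length $m-1$ while $A_i$ has length $m$, and the differencing shifts indices by one. Everything else is direct algebra.
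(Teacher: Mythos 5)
Your proof is correct and follows essentially the same route as the paper's: unfold the definitions of $P_\delta$, $T_\delta$ and $A_i$, rearrange $P[j+1]-P[j]=T[i+j+1]-T[i+j]$ into $A_i[j+1]=A_i[j]$, and induct from $j=\ell$. Your side remark that the argument actually yields $A_i[j]=A_i[\ell]$ up to $j=r+1$ is accurate but immaterial to the lemma as stated.
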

\begin{proof}
    Suppose that $P_\delta[\ell \upto r] = T_\delta[(i+\ell)\upto (i+r)]$. We proceed by induction on $j \in\{\ell,\ldots,r\}$. The base case $j=\ell$ is tautologically true. For the inductive step, let $j \in\{\ell+1,\ldots,r\}$. By the inductive hypothesis, we have that $A_i[j-1]=T[i+j-1]-P[j-1]=A_i[\ell]$. As $P_\delta[j-1]=T_\delta[i+j-1]$, by Definition~\ref{dfn:diffstr} (and rearranging the equation), we have $A_i[j] = T[i+j]-P[j]=T[i+j-1]-P[j-1]=A_i[\ell]$.
\end{proof}

In Section~\ref{sec:unbounded} we discussed that the value of $\dHam(i)$ equals $m-\max_\alpha \big|\Set{ j | A_i[j]=\alpha } \big|$, which could be found by sorting and scanning $A_i$ in $O(m \log m)$ time. However, we now have $A_i$ in run-length encoded form (with $O(k)$ runs), therefore the time taken to find $\dHam(i)$ is reduced to $O(k \log k)$. Over all alignments, this gives $O(nk \log k)$ time as desired.

We can now give an overview of our deterministic algorithm for \skMismatch. The steps are described in Algorithm~\ref{alg:deterministic} and the overall running time is given in Theorem~\ref{thm:detkmis} below.

\begin{algorithm}[t]
    \caption{Overview of deterministic solution to \skMismatch.
        \label{alg:deterministic}}
    \begin{enumerate}
        \item Compute the difference strings $P_\delta$ and $T_\delta$ by scanning $P$ and $T$.
        \item Run a $2k$-mismatch algorithm on $P_\delta$ and $T_\delta$ in order to find all alignments where the number of mismatches is at most $2k$. The $2k$-mismatch algorithm must also return the locations of the mismatches at any alignment where there are at most $2k$ mismatches.
        \item Discard all alignments with more than $2k$ mismatches.
        \item For each undiscarded alignment $i$, decompose $A_i$ into at most $4k+1$ runs (substrings with a common value). The start and end points of the runs are determined by scanning the locations of the  mismatches between $P_\delta$ and $T_\delta[i \ldots i+m-1]$.
        \item Sort the runs in $A_i$ by value in order to find the most frequent entry $\alpha$ in $A_i$. Then output $m-\big|\Set{ j | A_i[j]=\alpha }\big|$, which is the value $\dHam(i)$.
    \end{enumerate}
    \vspace{-8pt}
\end{algorithm}

\begin{theorem}
    \label{thm:detkmis}
    The shift-normalised $k$-mismatch problem (\skMismatch) can be solved deterministically in $O(n k \log k)$ time.
\end{theorem}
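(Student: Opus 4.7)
The plan is to implement and analyse Algorithm~\ref{alg:deterministic} step by step, charging the work to the pieces already justified by Lemmas~\ref{lem:TI-up} and~\ref{lem:run}. First I would compute the difference strings $P_\delta$ and $T_\delta$ in $O(n)$ time by a single scan. Then I would invoke the classical Landau--Vishkin kangaroo-hopping algorithm on $P_\delta$ against $T_\delta$ with mismatch budget $2k$. At each alignment $i$ this algorithm either certifies $\ham(P_\delta, T_\delta[i \upto i+m-2]) > 2k$ or returns the exact list of at most $2k$ mismatch positions, all in a total of $O(nk)$ time across alignments. By Lemma~\ref{lem:TI-up}, every alignment with $\dHam(i) \le k$ is preserved by this filter, so Step~3 may safely output $k+1$ at each discarded alignment.

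For each surviving alignment $i$, I would convert the list of mismatch positions into a run-length encoding of the shift array $A_i$ of Equation~(\ref{eq:shiftarray}). By Lemma~\ref{lem:run}, any maximal interval $[\ell, r]$ on which $P_\delta$ and $T_\delta[i \upto i+m-2]$ agree corresponds to a constant run $A_i[\ell] = A_i[\ell+1] = \cdots = A_i[r]$. Hence $A_i$ decomposes into at most $4k+1$ runs: a singleton run at each of the (at most $2k$) mismatch positions and one run per maximal stretch between consecutive mismatches. Each run's value is just $T[i+j] - P[j]$ evaluated at its starting index $j$, so the entire encoding is built in $O(k)$ time per alignment by one pass over the sorted mismatch-position list returned by the kangaroo phase.

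Finally, to extract $\dHam(i)$ I would sort the $O(k)$ runs by value in $O(k \log k)$ time, scan the sorted list accumulating the total length contributed by each distinct value, and pick the value $\alpha$ maximising $|\{j : A_i[j]=\alpha\}|$. The algorithm then outputs $\min\!\bigl(m - \max_{\alpha} |\{j : A_i[j]=\alpha\}|,\, k+1\bigr)$, which is exactly $\DskMismatch(i)$. Summing the $O(k \log k)$ cost across at most $n$ surviving alignments and adding the $O(nk)$ cost of the kangaroo stage and the $O(n)$ preprocessing gives the claimed $O(nk \log k)$ overall bound.

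I do not foresee a real obstacle, since the two structural lemmas are already established; the only care needed is to insist that the $2k$-mismatch subroutine returns explicit mismatch locations (standard for kangaroo hopping), so that the run decomposition is available directly in $O(k)$ time per alignment rather than requiring a fresh linear scan of the difference strings.
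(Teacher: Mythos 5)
Your proposal is correct and follows essentially the same route as the paper's own proof: difference strings, a $2k$-mismatch kangaroo filter justified by Lemma~\ref{lem:TI-up}, a run-length encoding of $A_i$ into $O(k)$ runs via Lemma~\ref{lem:run}, and an $O(k\log k)$ sort per surviving alignment. The only (welcome) extra care is your explicit capping of the output at $k+1$, which matches the definition of \skMismatch.
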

\begin{proof}
    The solution is outlined in Algorithm~\ref{alg:deterministic}. Correctness follows directly from the discussion in this section. The time complexity of the five steps is as follows. By inspection of the definition, the difference strings computed in Step~1 require $O(n)$ time. Step~2 uses a $2k$-mismatch algorithm as a black box and can be performed in $O(nk)$ time by using for example the algorithm in~\cite{LV:1986a}. Step~3 makes a single pass of the output of the $2k$-mismatch algorithm in $O(n)$ time. Step~4 constructs a run length encoded version of $A_i$ for each undiscarded~$i$. This requires scanning the $O(k)$ mismatches at each undiscarded alignment. Therefore Step~4 takes $O(nk)$ time. Step~5 scans and sorts each $A_i$ which takes $O(k \log k)$ time per alignment as $A_i$ is encoded by $O(k)$ runs. Overall the algorithm requires $O(nk \log k)$ time as claimed.
\end{proof}

\subsection{An improved, randomised solution} \label{sec:randkmis}

We now present an improved solution to the shift-normalised $k$-mismatch problem which runs in $O(cn \sqrt{k \log k} \log n)$ time. The improved algorithm is for the case that $k<\sqrt{m/6}$ and is randomised. The errors are one-sided (false-positives) and it outputs the correct answer at all alignments with probability at least $1-1/n^c$ for any constant $c$. For each position $i$, the algorithm gives a yes/no answer to the question ``is $\dHam(i) \leq k$?''. The algorithm does not output the actual distance $\dHam(i)$. Throughout this section, we use $T_i$ as shorthand for $T[i \upto i+m-1]$.

In Section~\ref{sec:detkmis} our deterministic algorithm made use of the locations of mismatches in the difference strings $P_\delta$ and $T_\delta[i \ldots i+m-1]$. Recall that the difference string $S_\delta$ was defined to give the differences between consecutive positions in a string~$S$. That is, $S_\delta[j]=S[j+1]-S[j]$ for all~$j$. A key observation was that $P_\delta[j]=T_\delta [i+j]$ if and only if $P[j]-T[i+j]=P[j+1]-T[i+j+1]=-\alpha$, i.e., the positions of $P[j]$ and $P[j+1]$ require the same shift $\alpha$ to match. However, there is no reason to consider only consecutive differences. In fact, as we will see, one may consider differences under any arbitrary permutation of the position set. This notion is formalised as follows.

\begin{definition}
    \label{dfn:pdiffstr}
    Let $S$ be a string of length $s$ and $\pi : [m] \to [m]$ be a permutation. The \emph{permuted difference string} of $S$ under $\pi$, denoted $S_\pi$, is defined by
    \begin{equation*}
        S_\pi[j] = S[\pi(j)] - S[j]
    \end{equation*}
    for all $j\in [s]$. The length of $S_\pi$ is $s$.
    %When considering some text substring $T_i$, we use $T_{i,\pi}$ as shorthand for $(T_i)_\pi$.
\end{definition}

Note that the permuted difference string  $S_\pi$ has length $|S|$ in contrast to the difference string $S_\delta$ of Definition~\ref{dfn:diffstr} which has length $|S|-1$.

The central idea of our improved algorithm is to use the value of $\ham(P_\pi,(T_i)_{\pi})$ to directly determine whether $\dHam(i) \leq k$ at each alignment~$i$. In Definition~\ref{dfn:TI-ktight} below we introduce the notion of a permutation being \emph{$k$-tight} for some $P,T_i$. Intuitively, $\pi$ is $k$-tight for $P,T_i$ if we can infer directly from $\ham(P_\pi,(T_i)_{\pi})$ whether $\dHam(i) \leq k$.

\begin{definition}
    \label{dfn:TI-ktight}
    Let $\pi$ be a permutation, $P$ a pattern and $T_i$ a text substring. We say that $\pi$ is \emph{$k$-tight} for $P,T_i$ if
    \begin{equation*}
        \dHam(i)\leq k \,\iff\, \ham(P_\pi,(T_i)_{\pi}) \leq 2k\,.
    \end{equation*}
\end{definition}

It would of course be highly desirable to find a permutation $\pi$ which is $k$-tight for all $P,T_i$ and any $k$. However, we will see that this is in general not possible. %and instead we will select the permutation at random.

We begin by showing that any $\pi$ has the property that $\dHam(i) \leq k$  implies that $\ham(P_\pi,(T_i)_{\pi}) \leq 2k$ for all $P,T_i$. To do so we first prove a general lemma which will also be useful later. Lemma~\ref{lem:TI-permup} then gives the desired property and is a generalisation of Lemma~\ref{lem:TI-up} to arbitrary permutations.

\begin{lemma}
    \label{lem:TI-kmisiff}
    Let $\pi$ be a permutation, $P$ a pattern and $T_i$ a text substring. For all $j\in [m]$,
    \begin{equation*}
        P[j]-T_i[j] = P[\pi(j)]-T_i[\pi(j)]  \;\iff\; P_\pi[j] = (T_i)_\pi[j] \,.
    \end{equation*}
\end{lemma}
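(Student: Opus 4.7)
The statement is a direct algebraic equivalence, so my plan is to unfold the definitions on both sides and show that the two equations differ only by a rearrangement of terms.

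First I would expand the right hand side using Definition~\ref{dfn:pdiffstr}: we have $P_\pi[j] = P[\pi(j)] - P[j]$ and $(T_i)_\pi[j] = T_i[\pi(j)] - T_i[j]$, so $P_\pi[j] = (T_i)_\pi[j]$ is equivalent to
\begin{equation*}
    P[\pi(j)] - P[j] \,=\, T_i[\pi(j)] - T_i[j]\,.
\end{equation*}
Then I would observe that moving the $P[j]$ and $T_i[\pi(j)]$ terms across the equality gives
\begin{equation*}
    P[j] - T_i[j] \,=\, P[\pi(j)] - T_i[\pi(j)]\,,
\end{equation*}
which is exactly the left hand side of the claimed equivalence. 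Since each step is an equality rearrangement (valid in both directions), the ``$\iff$'' follows immediately.

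There is no real obstacle here; the lemma is a one-line calculation whose purpose is to abstract a book-keeping identity that will be reused repeatedly in the analysis of the randomised algorithm. The only thing to be careful about is to quote Definition~\ref{dfn:pdiffstr} rather than Definition~\ref{dfn:diffstr}, because the permuted difference string uses $S[\pi(j)] - S[j]$ (length $|S|$) rather than consecutive differences. With that in mind the proof is a couple of lines long and requires no new ideas.
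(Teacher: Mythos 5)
Your proof is correct and matches the paper's own argument, which likewise just rearranges $P[j]-T_i[j] = P[\pi(j)]-T_i[\pi(j)]$ into $P[\pi(j)]-P[j] = T_i[\pi(j)]-T_i[j]$ and invokes Definition~\ref{dfn:pdiffstr}; the only cosmetic difference is that you unfold from the right-hand side while the paper starts from the left. Your remark about citing the permuted difference string definition rather than the consecutive-difference one is also the right point of care.
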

\begin{proof}
    The left-hand side of the arrow is the same as $P[\pi(j)]-P[j] = T_i[\pi(j)]-T_i[j]$, which by Definition~\ref{dfn:pdiffstr} is equivalent to the right-hand side of the arrow.
\end{proof}

\begin{lemma}
    \label{lem:TI-permup}
    Let $\pi$ be a permutation, $P$ a pattern and $T_i$ a text substring.
   \begin{equation*}
       \dHam(i)\leq k \,\implies\, \ham(P_\pi,(T_i)_{\pi}) \leq 2k \,.
    \end{equation*}
\end{lemma}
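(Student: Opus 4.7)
The plan is to mimic the proof of Lemma~\ref{lem:TI-up} but replace the role of the ``consecutive-index'' map $j \mapsto j+1$ with the arbitrary permutation $\pi$. The key leverage point is Lemma~\ref{lem:TI-kmisiff}, which already tells us exactly when $P_\pi[j] = (T_i)_\pi[j]$.

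First I would unpack the hypothesis $\dHam(i) \leq k$: by definition there exists some shift $\alpha$ such that the ``bad'' set
\begin{equation*}
    M \eqdef \Set{j \in [m] | P[j] - T_i[j] \neq -\alpha}
\end{equation*}
has cardinality at most $k$. Equivalently, on the complement $[m]\setminus M$ the difference $P[j] - T_i[j]$ takes the constant value $-\alpha$.

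Next I would observe that for any index $j$ with both $j \notin M$ and $\pi(j) \notin M$, we have
\begin{equation*}
    P[j] - T_i[j] \,=\, -\alpha \,=\, P[\pi(j)] - T_i[\pi(j)] \,,
\end{equation*}
so by Lemma~\ref{lem:TI-kmisiff} the permuted difference strings agree at position $j$, i.e.\ $P_\pi[j] = (T_i)_\pi[j]$. Contrapositively, any mismatch position of $P_\pi$ against $(T_i)_\pi$ must lie in the set $M \cup \pi^{-1}(M)$.

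Finally, a union-bound finishes the job: since $\pi$ is a bijection, $|\pi^{-1}(M)| = |M| \leq k$, hence $|M \cup \pi^{-1}(M)| \leq 2k$, giving $\ham(P_\pi,(T_i)_\pi) \leq 2k$ as required. There is no real obstacle here; the only thing to be careful about is that the argument bounds the mismatches by the union $M \cup \pi^{-1}(M)$ (and not, say, by $2|M|$ without a union), which is why the factor of $2$ appears and why, as foreshadowed in the discussion after Definition~\ref{dfn:TI-ktight}, the converse direction (needed for $k$-tightness) cannot be obtained for free and will require a careful choice of $\pi$.
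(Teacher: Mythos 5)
Your proof is correct and follows essentially the same route as the paper's: both identify the at-most-$k$-element set of $\alpha$-mismatch positions, note via Lemma~\ref{lem:TI-kmisiff} that any mismatch of $P_\pi$ against $(T_i)_\pi$ must lie in that set or its preimage under $\pi$, and conclude with the bound $2k$ from bijectivity. No gaps.
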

\begin{proof}
    Let $P$ and $T_i$ be such that $\dHam(i)\leq k$. By definition there exists an $\alpha$ such that the set $J=\Set{ j | P[j]+\alpha \neq T_i[j] }$ has size at most $k$. As $\pi$ is a permutation, there are at most $2k$ positions $j\in [m]$ such that either $j\in J$ or $\pi(j)\in J$. Therefore, for all (at least) $m-2k$ remaining positions $j'\in [m]$ we have that $P[j']+\alpha=T_i[j']$ and $P[\pi(j')]+\alpha=T_i[\pi(j')]$. For each such position $j'$, by rearranging the two equations it follows from Lemma~\ref{lem:TI-kmisiff} that $P_\pi[j'] = (T_i)_\pi[j']$. Thus, there are at most $2k$ mismatches between $P_\pi$ and $(T_i)_{\pi}$.
\end{proof}

A logical next step would be to attempt to find a permutation $\pi$ with the property that $\ham(P_\pi,(T_i)_{\pi}) \leq 2k$ implies that $\dHam(i) \leq k$ for all $P,T_i$. Unfortunately, Lemma~\ref{lem:notconv} below shows that no such permutation can exist. As Corollary~\ref{cor:notktight} states, this immediately implies that there is no permutation which is $k$-tight for all $P,T_i$. Instead we will select our permutation at random and show that we can obtain a permutation that is $k$-tight for a given $P,T_i$ with constant probability.

\begin{lemma}
    \label{lem:notconv}
    Let $\pi$ be any permutation and $6 \leq k < m/4$. There exists a pattern $P$ and text substring $T_i$ such that
    \begin{equation*}
        \dHam(i) > k \quad \text{and} \quad \ham(P_\pi,(T_i)_{\pi}) \leq 2k \,.
    \end{equation*}
\end{lemma}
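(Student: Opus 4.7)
My plan is to construct $P$ and $T_i$ explicitly from the cycle structure of $\pi$. By Lemma~\ref{lem:TI-kmisiff}, the Hamming distance $\ham(P_\pi,(T_i)_\pi)$ counts precisely the positions $j$ at which the ``shift value'' $S[j] \eqdef T_i[j]-P[j]$ differs from $S[\pi(j)]$. So my goal is to design $S$ to be almost constant along every cycle of $\pi$, keeping the mismatch count small, while simultaneously ensuring no single value of $S$ is attained by more than $m-k-1$ positions, which forces $\dHam(i) > k$.

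Let $c_1\geq c_2\geq\cdots\geq c_t$ denote the cycle lengths of $\pi$ (so $\sum_\ell c_\ell = m$) and take $P$ to be the all-zero string, so that $S=T_i$. If $c_1 < m-k$, assign a distinct integer to each cycle and let $T_i$ equal that integer on the positions of the corresponding cycle; then $S$ is constant on every cycle, giving $0$ break points and a most-frequent-value multiplicity of $c_1 < m-k$. Otherwise $c_1 \geq m-k$, and I split the longest cycle $C_1$ into two consecutive arcs (in the cyclic order induced by $\pi$) of sizes $\lceil c_1/2\rceil$ and $\lfloor c_1/2\rfloor$, then assign a distinct integer value to each such arc and to every remaining cycle, again defining $T_i$ to be that constant on each piece.

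For the verification, $\ham(P_\pi,(T_i)_\pi)$ equals the number of $j$ with $S[j]\neq S[\pi(j)]$, which is $0$ in the first case and exactly $2$ in the splitting case (the two cyclic transitions bounding the two arcs inside $C_1$); both are comfortably at most $2k$ since $k\geq 6$. Because all arcs carry pairwise distinct integer values, the maximum multiplicity of any value in $S$ equals the length of the longest arc. In the first case this is $c_1 < m-k$. In the splitting case, $c_1 \geq m-k$ together with $\sum_\ell c_\ell = m$ forces $c_2 \leq m-c_1 \leq k < m-k$, while the split arcs have length at most $\lceil c_1/2\rceil \leq \lceil m/2\rceil < m-k$. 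Hence the longest arc has length strictly less than $m-k$, and $\dHam(i) > k$ as required.

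The only real numerical content is the inequality $\lceil m/2\rceil < m-k$, which is precisely where the hypothesis $k<m/4$ is consumed; everything else is bookkeeping around the cycle decomposition and Definition~\ref{dfn:pdiffstr}. The main conceptual step is recognising, via Lemma~\ref{lem:TI-kmisiff}, that making $S$ constant on each cycle (or on a bounded-size partition of each cycle) is the right way to trade a small number of permuted-difference mismatches for a spread-out distribution of shift values.
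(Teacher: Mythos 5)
Your proof is correct, and it takes a genuinely different route from the paper's. The paper also sets $P$ to the all-zero string, but it builds $T_i$ as a $0/1$ string: it greedily selects $k'=\lfloor k/2\rfloor+1$ locations whose $\pi$-images and $\pi$-preimages are pairwise disjoint (possible because $3k'<m$ when $k<m/4$), places a $1$ at each chosen location and at its image, and then counts directly to get $\dHam(i)=2k'>k$ while $\ham(P_\pi,(T_i)_\pi)\leq 3k'$; the hypothesis $k\geq 6$ is consumed precisely in the closing inequality $3(k/2+1)\leq 2k$. You instead exploit the cycle decomposition of $\pi$ globally, making the shift string constant on each cycle and splitting the longest cycle once when it exceeds $m-k$. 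Your construction is stronger in two respects: it achieves $\ham(P_\pi,(T_i)_\pi)\leq 2$ rather than merely $\leq 2k$, showing the permuted-difference test can fail arbitrarily badly rather than only by a factor of two, and it needs only $k\geq 1$ and $k<\lfloor m/2\rfloor$ rather than $6\leq k<m/4$. The trade-off is that your argument requires the case split on $c_1$ and a little care with the arc boundaries, whereas the paper's construction is oblivious to the cycle structure and keeps $\dHam(i)$ just above the threshold $k$. Either version suffices for Corollary~\ref{cor:notktight}, and your verification of the key identities via Lemma~\ref{lem:TI-kmisiff} (that $P_\pi[j]\neq(T_i)_\pi[j]$ exactly when the shift value changes from $j$ to $\pi(j)$, and that $\dHam(i)=m$ minus the maximum multiplicity of a shift value) matches the paper's usage of that lemma.
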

\begin{proof}
    We define $P$ to be an $m$-length string of zeros. In order to define the $m$-length string $T_i$ we first introduce some notation.

    Let $k'=\lfloor k/2 \rfloor +1$. We identify a set of $k'$ locations $\ell_0,\ell_1,\ldots, \ell_{k'-1}\in [m]$ as follows. Location $\ell_0=0$. For $q\in\{1,\dots,k'-1\}$, location $\ell_q$ is the smallest position in $[m]$ that is not any of the preceding locations $\ell_0,\dots,\ell_{q-1}$ or any location that is mapped to or from by any of these preceding locations (under $\pi$). Formally, $\ell_q$ is the smallest location which is not in the set $L_q=\Set{ \ell_{q'},\, \pi(\ell_{q'}),\, \pi^{-1}(\ell_{q'}) | q'\in [q]}$. Observe that the set $L_q$ has size at most $3k' \leq 3(k/2+1) < 3k < m$ (since $k<m/4$), hence such a location always exists.

    We can now define $T_i$ as follows. For all $q\in [k']$, let $T_i[\ell_q]=1$ and $T_i[\pi(\ell_q)]=1$. At all other locations $j$, $T_i[j]=0$. Observe that by construction, the locations $\ell_0,\ldots, \ell_{k'-1}$ and $\pi(\ell_0),\ldots, \pi(\ell_{k'-1})$ are all distinct. Therefore, $T_i$ contains exactly $2k'$ ones and $m-2k'$ zeros. As $2k'\leq k+2< m/2$, more than half the locations have $T_i[j]=P[j]=0$, and therefore $\dHam(i)$ is minimised by the shift $\alpha=0$. Thus, $\dHam(i) = 2k' > k$.

    We proceed by showing that the alignment of $P_\pi$ and $(T_i)_\pi$ contains at least $m-3k'$ matches. There are $m-2k'$ locations $j$ in $T_i$ such that $T_i[j]=0$. Of these locations, at most $2k'$ have $T_i[\pi(j)]=1$. Therefore, there are at least $m-4k'$ locations $j$ such that $T_i[j]=T_i[\pi(j)]=0$. As $P[j]=P[\pi(j)]=0$, we have by Lemma~\ref{lem:TI-kmisiff} that $P_\pi[j]=(T_i)_\pi[j]$ at $m-4k'$ locations. Now consider locations $\ell_q$ for $q\in [k']$. By construction, $T_i[\ell_q]=T_i[\pi(\ell_q)]=1$ and therefore $P_\pi[\ell_q]=(T_i)_\pi[\ell_q]$ by Lemma~\ref{lem:TI-kmisiff}. This implies a further $k'$ matching locations. There are therefore at least at least $m-3k'$ matches or at most $3k'$ mismatches between $P_\pi$ and $(T_i)_\pi$. Since $3k' \leq 3(k/2 + 1) \leq 2k$ for all $k \geq 6$ we have that $\ham(P_\pi,(T_i)_\pi) \leq 2k$.
\end{proof}

\begin{corollary}
    \label{cor:notktight}
    Let $\pi$ be any permutation and $6 \leq k < m/4$. There exists a pattern $P$ and text substring $T_i$ for which $\pi$ is not $k$-tight.
\end{corollary}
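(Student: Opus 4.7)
The corollary is an immediate consequence of Lemma~\ref{lem:notconv}, so my plan is simply to invoke it and verify that the biconditional in the definition of $k$-tightness is violated. Concretely, given any permutation $\pi$ and any $k$ with $6 \leq k < m/4$, apply Lemma~\ref{lem:notconv} to obtain a specific pattern $P$ and text substring $T_i$ satisfying $\dHam(i) > k$ together with $\ham(P_\pi,(T_i)_{\pi}) \leq 2k$.

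Now consult Definition~\ref{dfn:TI-ktight}: the permutation $\pi$ is $k$-tight for this particular $P, T_i$ precisely when $\dHam(i) \leq k$ and $\ham(P_\pi,(T_i)_{\pi}) \leq 2k$ are equivalent. For the instance produced by the lemma, the second condition holds while the first fails, so the right-to-left implication of the biconditional is false. Hence $\pi$ is not $k$-tight for these $P, T_i$, which is exactly what the corollary asserts.

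There is essentially no obstacle: all the substantive work has already been done in Lemma~\ref{lem:notconv}, where the adversarial construction of $T_i$ (placing ones at carefully selected locations $\ell_q$ and their $\pi$-images so that $P_\pi$ and $(T_i)_\pi$ agree on most positions, while the raw strings $P$ and $T_i$ disagree in more than $k$ positions under every shift $\alpha$) is the whole point. The corollary merely restates that construction in the language of Definition~\ref{dfn:TI-ktight}, so the proof will be just one or two sentences long.
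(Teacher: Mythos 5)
Your proof is correct and matches the paper's, which likewise disposes of the corollary as immediate from Definition~\ref{dfn:TI-ktight} and Lemma~\ref{lem:notconv}. Your added observation that it is the right-to-left implication of the biconditional that fails is accurate and harmless elaboration.
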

\begin{proof}
Immediate from Definition~\ref{dfn:TI-ktight} and Lemma~\ref{lem:notconv}.
\end{proof}

\subsubsection{Random permutations} \label{sec:permutations}

We will choose a permutation uniformly at random  from a simple family of permutations. On first inspection, we could have chosen from the family of all permutations. We claim without proof that a permutation chosen uniformly at random from the family of all permutations is $k$-tight for any $P,T_i$ with constant probability. However, we must be able to efficiently compute  $\ham(P_\pi,(T_i)_{\pi})$ for all $i$ under our chosen permutation. The key problem being that in general $(T_i)_\pi$ is not easily obtained from $T$. As $i$ varies, $(T_i)_\pi$ could change drastically, even when $i$ is only incremented by one. Therefore we must be careful in selecting our family of permutations.

We will use the family of cyclic permutations, denoted $\Fam$ (for patterns of length $m$), defined as follows.

\begin{definition}
    \label{dfn:cperm}
    The set $\Fam$ contains the $m-1$ cyclic permutations $\pi_1,\pi_2,\ldots, \pi_{m-1}$, where
    \begin{equation*}
        \pi_q(j) = j + q \bmod m \,.
    \end{equation*}
\end{definition}

We now show in Lemma~\ref{lem:k-tight} that $\Fam$ has the desired property of $k$-tightness when $m > 6k^2$. There is a corner case when $k\in\{0,1\}$ which is easily solved in $O(n)$ time using our deterministic algorithm from Section~\ref{sec:detkmis}. For Lemma~\ref{lem:k-tight} we require that $k\geq 2$.

\begin{lemma}
    \label{lem:k-tight}
    Let $P$ be a pattern and $T_i$ a text substring. When $m > 6k^2$ and $k \geq 2$,
    \begin{equation*}
        \frac{\big|\Set{ \pi | \pi\in \Fam \text { is $k$-tight for $P,T_i$} } \big|}{|\Fam|} \geq \frac{1}{6} \,.
    \end{equation*}
\end{lemma}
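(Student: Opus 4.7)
The ratio is trivially $1 \geq 1/6$ when $\dHam(i) \leq k$, because by Lemma~\ref{lem:TI-permup} every $\pi \in \Fam$ is $k$-tight in that case. So assume $\dHam(i) > k$, and write $D[j] = T_i[j] - P[j]$ with frequencies $f_v = |\{j : D[j] = v\}|$, so that $M := \max_v f_v \leq m - k - 1$. By Lemma~\ref{lem:TI-kmisiff}, $Z_q := \ham(P_{\pi_q}, (T_i)_{\pi_q})$ equals $|\{j : D[j] \neq D[(j+q) \bmod m]\}|$, and in this regime $\pi_q$ is $k$-tight exactly when $Z_q > 2k$. Hence the task is to show $\Pr_q[Z_q > 2k] \geq 1/6$ for $q$ uniform in $\{1, \ldots, m-1\}$; the starting identity, obtained by swapping the order of summation, is $\sum_{q=1}^{m-1} Z_q = m^2 - \sum_v f_v^2$. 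I plan to split into two cases according to whether the most common value of $D$ occupies at least half of the positions.

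If $M < m/2$, then $\sum_v f_v^2 \leq Mm < m^2/2$, so $E_q[Z_q] > m/2$. Combined with the trivial upper bound $Z_q \leq m$, the reverse Markov inequality gives $\Pr_q[Z_q > 2k] \geq (E_q[Z_q] - 2k)/(m-2k) > (m - 4k)/(2(m-2k))$, which is at least $1/6$ whenever $m \geq 5k$, and this is comfortably implied by $m > 6k^2$.

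If $M \geq m/2$, fix $v^*$ attaining the maximum, let $A = \{j : D[j] = v^*\}$, $B' = [m] \setminus A$, and $s = |B'| = m - M \in [k+1, m/2]$. Any $j \in A$ with $(j+q) \bmod m \in B'$ and any $j \in B'$ with $(j+q) \bmod m \in A$ contributes a mismatch, so a short counting argument (using that $|A \cap (B'-q)| = M - |A \cap (A-q)| = s - b(q)$, where $b(q) := |B' \cap ((B'-q) \bmod m)|$) gives $Z_q \geq 2(s - b(q))$. Hence $Z_q \leq 2k$ forces $b(q) \geq s - k$, and since $\sum_{q=0}^{m-1} b(q) = |B'|^2 = s^2$ with $b(0) = s$, averaging yields
\begin{equation*}
    |\{q \in \{1, \ldots, m-1\} : Z_q \leq 2k\}| \;\leq\; g(s), \qquad g(s) := \frac{s(s-1)}{s-k}.
\end{equation*}

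The final step, and the main technical obstacle, is to verify $g(s) \leq 5(m-1)/6$ for every integer $s \in [k+1, m/2]$. Writing $g(s) = s + k - 1 + k(k-1)/(s-k)$ shows $g$ is convex on $(k, \infty)$, so its maximum on the interval is attained at an endpoint. At $s = k+1$ the value is $k(k+1)$, which is well below $5(m-1)/6$ under $m > 6k^2$. The $s = m/2$ endpoint is the tight one: after clearing denominators the desired inequality reduces to the quadratic $m^2 + (1/2 - 5k)m + 5k \geq 0$ in $m$, which by a short discriminant analysis holds for all $m$ exceeding roughly $5k$, and hence certainly for $m > 6k^2$ with $k \geq 2$. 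This endpoint calculation is what pins down the precise $6k^2$ threshold in the hypothesis.
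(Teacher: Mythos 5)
Your proof is correct, but it takes a genuinely different route from the paper's. The paper fixes $h=\dHam(i)$ and argues in three cases ($k<h\leq 2k$, $2k<h\leq m/3$, $h>m/3$), each with its own counting argument: a union bound over pairs of $\widehat{\alpha}$-mismatches colliding under a cyclic shift, a mismatch-sum restricted to a set $K$ of $2k$ mismatches and its preimage, and a global mismatch-sum, respectively. You instead split on whether the majority value of $D[j]=T_i[j]-P[j]$ occupies at least half the positions. Your first case is close in spirit to the paper's Case~3 but cleaner: the exact identity $\sum_{q}Z_q=m^2-\sum_v f_v^2$ plus reverse Markov replaces the paper's per-position counting. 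Your second case has no counterpart in the paper: reducing $Z_q\leq 2k$ to a lower bound $b(q)\geq s-k$ on the cyclic autocorrelation of the minority set $B'$, and then applying Markov to $\sum_{q\geq 1}b(q)=s(s-1)$, simultaneously handles the paper's Cases~1 and~2. I checked the details ($Z_q\geq 2(s-b(q))$ via Lemma~\ref{lem:TI-kmisiff}, the convexity of $g$, and both endpoint evaluations) and they are sound. One payoff of your argument is that it actually needs less than the stated hypothesis: the binding constraint is the $s=k+1$ endpoint, $k(k+1)\leq 5(m-1)/6$, i.e.\ $m\gtrsim 1.2k^2$, whereas the paper's Case~1 bound $(m-1)-(h-1)h\geq m-4k^2\geq m/3$ is exactly what forces $m\geq 6k^2$ there. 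Your closing remark has this backwards: the $s=m/2$ endpoint only requires $m\gtrsim 5k$, so it is the $s=k+1$ endpoint, not $s=m/2$, that is asymptotically tight for your proof — and neither pins down the constant $6$; that constant comes from the paper's own Case~1. This is a misstatement in your commentary only and does not affect the validity of the proof under the lemma's hypothesis.
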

\begin{proof}
    Let $\rho = \big|\Set{ \pi | \pi\in \Fam \text { is $k$-tight for $P,T_i$} } \big|/|\Fam|$. We will show that $\rho\geq 1/6$. Note that $|\Fam|=m-1$. We let $h=\dHam(i)$ be the minimal number of mismatches between $P$ and $T_i$, and $\alphah$ be the shift which minimises $\dHam(i)$.

    Assume first that $h \leq k$. By Lemma~\ref{lem:TI-permup} and Definition~\ref{dfn:TI-ktight} we have that that every $\pi\in\Fam$ is $k$-tight for $P,T_i$ and therefore $\rho=1$. Assume second that that $h>k$.  We split the proof into three cases:
    \begin{equation*}
        \textbf{Case 1. \;} k < h \leq 2k \qquad\quad
        \textbf{Case 2. \;} 2k < h \leq \frac{m}{3} \qquad\quad
        \textbf{Case 3. \;} \frac{m}{3} < h
    \end{equation*}

    First we introduce some notation. There are exactly $m-h$ positions $j$ where $\alphah+P[j]=T_i[j]$. We call such a position an $\alphah$-match. Similarly, any position with $\alpha+P[j]=T_i[j]$ for some $\alpha$ is called an $\alpha$-match. Positions which are not $\alpha$-matches are called $\alpha$-mismatches. Hence there are $h$ distinct $\alphah$-mismatches. We will refer to $\pi(j)$ as the position that $j$ is mapped to (by $\pi$).
    \medskip

    \noindent
    \textbf{Case 1 ($k < h \leq 2k$).}
    Let $j$ be an arbitrary $\alphah$-mismatch. Position $j$ is mapped to another  $\alphah$-mismatch in exactly $h-1$ distinct permutations of $\Fam$. This holds for each of the $h$ distinct $\alphah$-mismatches. Hence there are at most $(h-1)h$ permutations under which some $\alphah$-mismatch is mapped to another $\alphah$-mismatch. The remaining (at least) $(m-1)-(h-1)h$ permutations $\pi$ in $\Fam$ immediately have the following two properties:
    \begin{enumerate}
        \setlength{\itemsep}{0pt}
        \item[(i)] if position $j$ is an $\alphah$-mismatch then $\pi(j)$ is an $\alphah$-match;
        \item[(ii)] if position $\pi(j)$ is an $\alphah$-mismatch then position $j$ is an $\alphah$-match.
    \end{enumerate}
    There are $h$ positions $j$ with property~(i) and another (disjoint) $h$ positions $j$ with property~(ii). That is, for each $\alphah$-mismatch there are two positions $j$ that meet one of the two properties above. By Lemma~\ref{lem:TI-kmisiff}, each such $j$ implies that $P_\pi[j] \neq (T_i)_\pi[j]$. Therefore, in each of these $(m-1)-(h-1)h$ permutations $\pi$, $\ham(P_\pi,(T_i)_\pi) \geq 2h > 2k$ and so each such permutation is $k$-tight for $P,T_i$. By the assumption of Case~1, $h\leq 2k$, and the assumptions that $m \geq 6k^2$ and $k\geq 2$, we have that $(m-1)-(h-1)h \geq m - 4k^2 \geq m/3$. Thus, $\rho \geq (m/3)/(m-1) > 1/6.$
    \medskip

    \noindent
    \textbf{Case 2 ($2k < h \leq m/3$).}
    Let $K$ be an arbitrary set of $2k$ distinct $\alphah$-mismatches. For any permutation $\pi$, let
    \begin{equation*}
        K_\pi^{-1}= \Set{ j | \pi(j) \in K } \,.
    \end{equation*}
    We define
    \begin{equation*}
        \ham_{K}(P_\pi,(T_i)_\pi) =
            \big| \Set{ j | j\in (K \cup K_\pi^{-1}) \,\wedge\, P_\pi[j] \neq (T_i)_\pi[j] } \big|
    \end{equation*}
    to be the number of mismatch positions between $P_\pi$ and $(T_i)_\pi$ that are also in $K$ or $K_\pi^{-1}$. We now consider the total number of mismatches between $P_\pi$ and $(T_i)_\pi$ (that are in $K$ or $K_\pi^{-1}$) summed over all permutations in $\Fam$. Let
    \begin{equation*}
        H_K(P,T_i) = \sum_{\pi \in \Fam} \ham_K(P_\pi,(T_i)_\pi) \,.
    \end{equation*}

    Since $h\leq m/3$ by the assumption of Case~2, there are at least $2m/3$ $\alphah$-matches. A permutation $\pi$ that maps a position $j \in K$ to an $\alphah$-match creates a mismatch $P_{\pi}[j] \neq (T_i)_\pi[j]$ by Lemma~\ref{lem:TI-kmisiff} (as $j$ is an $\alphah$-mismatch). For a fixed $j\in K$, the number of permutations in $\Fam$ that map $j$ to an $\alphah$-match equals the number of $\alphah$-matches, which is at least $2m/3$. Thus, the set $K$ of $2k$ $\alphah$-mismatches contributes at least $2k \cdot (2m/3)$ to $H_K(P,T_i)$.

    Similarly, any position $j$ which is an $\alphah$-match creates a mismatch $P_{\pi}[j] \neq (T_i)_\pi[j]$ by Lemma~\ref{lem:TI-kmisiff} if it is mapped to an $\alphah$-mismatch in $K$. This occurs under exactly $2k$ permutations. Recall that any $j$ which is mapped to a position in $K$ under $\pi$ belongs to $K_\pi^{-1}$. Therefore, given that there are at least $2m/3$ $\alphah$-matches, the contribution is at least $2k \cdot (2m/3)$ further distinct mismatches to  $H_K(P,T_i)$.

    Summing up the previous two paragraphs, we have shown that $H_K(P,T_i) \geq (8/3)mk$.
    Each permutation $\pi$ that is not $k$-tight for $P,T_i$ has $\ham(P_\pi,(T_i)_\pi) \leq 2k$ (since $h>k$). Therefore, $m \cdot 2k$ is a generous upper bound on the number of mismatches across all permutations which are not $k$-tight. This leaves at least $(8/3)mk-2mk=(2/3)mk$ mismatches among the $k$-tight permutations of $\Fam$. Since $|K \cup K_\pi^{-1}| \leq 4k$, we have that $\ham_K(P_\pi,(T_i)_\pi) \leq 4k$ for any $\pi$, hence each permutation contributes at most $4k$ mismatches to $H_K(P,T_i)$. Therefore there are at least $(2/3)mk/(4k)=m/6$ distinct $k$-tight permutations. Thus, $\rho \geq (m/6)/(m-1) \geq 1/6$.
    \medskip

    \noindent
    \textbf{Case 3 ($m/3 < h$).}
    Similarly to Case~2, we consider the total number of mismatches between $P_\pi$ and $(T_i)_\pi$ summed over all permutations in $\Fam$. Let
    \begin{equation*}
        H(P,T_i) = \sum_{\pi \in \Fam} \ham(P_\pi,(T_i)_\pi) \,.
    \end{equation*}
    Since $h > {m/3}$ the number of $\alpha$-mismatches is more than $m/3$ for all $\alpha$. Fix an arbitrary position $j$ and choose an $\alpha$ such that $j$ is an $\alpha$-match. There are at least $m/3$ permutations $\pi$ in $\Fam$ that map position $j$ to an $\alpha$-mismatch. By Lemma~\ref{lem:TI-kmisiff}, $P_\pi[j] \neq (T_i)_\pi[j]$ for each of these permutations. Hence position $j$ will contribute with at least $m/3$ to $H(P,T_i)$. By considering all $m$ positions $j$, we have that $H(P,T_i) \geq m\cdot (m/3)$.

    Similarly to the reasoning in Case~2, each permutation $\pi$ that is not $k$-tight for $P,T_i$ has $\ham(P_\pi,(T_i)_\pi) \leq 2k$ (since $h>k$). Again, $m \cdot 2k$ is a generous upper bound on the number of mismatches across all permutations which are not $k$-tight. This leaves at least $m^2/3-2mk$ mismatches among the $k$-tight permutations of $\Fam$. As certainly $\ham(P_\pi,(T_i)_\pi) \leq m$, we have that there are at least $m/3-2k$ distinct $k$-tight permutations for $P,T_i$. Therefore,
    \begin{equation*}
        \rho \geq \frac{m/3-2k}{m-1} \geq \frac{k-1}{3k} \geq \frac{1}{6} \,,
    \end{equation*}
    where the second inequality follows from $m > 6k^2$ and the last inequality from $k \geq 2$, both assumptions in the statement of the lemma.
\end{proof}

\subsubsection{The algorithm}

Before describing the randomised algorithm we turn our attention to the problem of finding all positions $i\in [n-m+1]$ such that $\ham(P_{\pi},(T_i)_{\pi}) \leq 2k$ under an arbitrary cyclic permutation $\pi \in \Fam$. We will describe a simple deterministic algorithm that computes $\ham(P_\pi,(T_i)_{\pi})$ by reduction to the conventional $k$-mismatch problem.

Let $\pi_q\in \Fam$ be a fixed but arbitrary permutation ($q\in [1,\dots,m-1]$). Recall that $\pi_q(j) = j + q \bmod m$. We define
\begin{align*}
    P^+_q &= P_{\pi_q}[0\upto (m-q-1)] \,, \\
    P^-_q &= P_{\pi_q}[(m-q)\upto (m-1)] \,.
\end{align*}
Thus, $P_{\pi_q} = P^+_q\concat P^-_q$. We have $|P^+_q|=m-q$ and $|P^-_q|=q$. Now define $T^+_q$ and $T^-_q$ such that
\begin{align*}
    T^+_q[j] &= T[j+q] - T[j] \,, \\
    T^-_q[j] &= T[j+q-m] - T[j] \,,
\end{align*}
for all $j\in [n]$ (except those that take the indices ``out of range''). Observe that
\begin{equation*}
    (T_i)_{\pi_q} = T^+_q[i\upto (i+m-q-1)] \;\concat\; T^-_q[(i+m-q)\upto (i+m-1)] \,,
\end{equation*}
where the first substring has length $m-q$ and the second substring has length $q$. From these definitions it now follows directly that
\begin{align}
    \label{eq:hamconcat}
    \ham(P_{\pi_q},(T_i)_{\pi_q}) = \hspace{1em}
        &\,\ham\big(P^+_q,T^+_q[i\upto (i+m-q-1)]\big) \\
        + &\,\ham\big(P^-_q,T^-_q[(i+m-q)\upto (i+m-1)]\big) \,. \notag
\end{align}
Thus, in order to determine which positions $i$ have $\ham(P_{\pi_q},(T_i)_{\pi_q})\leq 2k$, we first construct $P^+_q$, $P^-_q$, $T^+_q$ and $T^-_q$, and then we run a standard $2k$-mismatch algorithm on the pairs $(P^+_q,T^+_q)$ and $(P^-_q,T^-_q)$ and use the previous formula.

We can now finally give an overview of our randomised algorithm for the \skDecision problem. The steps are described in Algorithm~\ref{alg:random}. The overall running time and proof of correctness is given in Theorem~\ref{thm:rankmis} below. The algorithm makes one-sided errors and outputs a false match (incorrectly reports $\dHam(i) \leq k$) with constant probability per alignment. As we will see in the proof of Theorem~\ref{thm:rankmis}, by running the algorithm a logarithmic number of times drastically reduces the probability of an error occurring at one or more alignments.

\begin{algorithm}[t]
    \caption{Overview of randomised solution to \skDecision.
        \label{alg:random}}
    \begin{enumerate}
        \item Pick a cyclic permutation $\pi_q \in \Fam$ uniformly at random.
        \item Construct the strings $P^+_q$, $P^-_q$, $T^+_q$ and $T^-_q$.
        \item Run a $2k$-mismatch algorithm on the pairs $(P^+_q,T^+_q)$ and $(P^-_q,T^-_q)$ as a black box.
        \item Using the results from Step~3 and Equation~(\ref{eq:hamconcat}), compute $\ham(P_{\pi_q},(T_i)_{\pi_q})$ for all~$i$.
        \item Any alignment $i$ with $\ham(P_{\pi_q},(T_i)_{\pi_q}) \leq 2k$ is declared to have $\dHam(i) \leq k$.
    \end{enumerate}
    \vspace{-8pt}
\end{algorithm}

\begin{theorem}
    \label{thm:rankmis}
    For any choice of constant $c$, \skDecision can be solved randomised in $O(cn \sqrt{k \log k}\log n)$ (deterministic) time when $k< \sqrt{m/6}$. The algorithm makes only false-positive errors (incorrectly declares the Hamming distance is at most~$k$). With probability at least $1-1/n^c$, the algorithm is correct at every alignment.
\end{theorem}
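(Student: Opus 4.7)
The plan is to run Algorithm~\ref{alg:random} a total of $t = \Theta(c \log n)$ times independently (with fresh random choices of $\pi_q \in \Fam$ each time), and report $\DskDecision(i)=0$ only at those alignments $i$ where \emph{every} iteration reports $\ham(P_{\pi_q}, (T_i)_{\pi_q}) \leq 2k$. The analysis then splits into a single-iteration correctness argument, an amplification step, and a running-time accounting.

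For a single iteration, Lemma~\ref{lem:TI-permup} guarantees that $\dHam(i)\leq k$ implies $\ham(P_\pi,(T_i)_\pi)\leq 2k$ for \emph{every} cyclic $\pi \in \Fam$, so Algorithm~\ref{alg:random} never rejects a true match; taking the conjunction over iterations preserves this one-sided (false-positive only) error property. For the false-positive side, when $\dHam(i) > k$, Lemma~\ref{lem:k-tight} applies because the hypothesis $k<\sqrt{m/6}$ of the theorem is exactly the condition $m>6k^2$ required by the lemma, and it guarantees that a uniformly random $\pi \in \Fam$ is $k$-tight for the pair $(P,T_i)$ with probability at least $1/6$. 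Being $k$-tight (Definition~\ref{dfn:TI-ktight}) means that whenever $\dHam(i)>k$ we must have $\ham(P_\pi,(T_i)_\pi) > 2k$, so each iteration independently rejects the false alignment with probability at least $1/6$. The corner case $k\in\{0,1\}$, where Lemma~\ref{lem:k-tight} requires $k\geq 2$, is handled in $O(n)$ time by the deterministic algorithm of Theorem~\ref{thm:detkmis}.

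For amplification, the probability that a fixed alignment $i$ with $\dHam(i)>k$ is erroneously reported after all $t$ iterations is at most $(5/6)^t$. Choosing $t = \lceil (c+1)\log_{6/5} n \rceil = \Theta(c\log n)$ brings this below $1/n^{c+1}$, and a union bound over the at most $n$ alignments yields overall error probability at most $1/n^c$, as required.

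For the running time, a single iteration consists of: sampling $\pi_q$ and constructing $P^+_q, P^-_q, T^+_q, T^-_q$ in $O(n)$ time from their definitions; running a standard $2k$-mismatch algorithm on the two pairs $(P^+_q,T^+_q)$ and $(P^-_q,T^-_q)$, which takes $O(n\sqrt{k\log k})$ time using the algorithm of~\cite{ALP:2004}; assembling $\ham(P_{\pi_q},(T_i)_{\pi_q})$ for all $i$ via Equation~(\ref{eq:hamconcat}) in $O(n)$ time; and thresholding in $O(n)$. Thus a single iteration runs in $O(n\sqrt{k\log k})$ deterministic time, and over $t=\Theta(c\log n)$ iterations the total running time is $O(cn\sqrt{k\log k}\log n)$. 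The main conceptual obstacle, already surmounted by Lemma~\ref{lem:k-tight}, was ensuring that the restricted family $\Fam$ of cyclic permutations (as opposed to the full symmetric group) still supplies $k$-tightness with constant probability; without this restriction the reduction to the standard $k$-mismatch problem via Equation~(\ref{eq:hamconcat}) would not be available, and the running time would not be achievable.
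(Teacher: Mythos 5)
Your proposal is correct and follows essentially the same route as the paper's proof: one-sided correctness from Lemma~\ref{lem:TI-permup}, a $1/6$ rejection probability per iteration from Lemma~\ref{lem:k-tight} (with the $k\in\{0,1\}$ corner case deferred to the deterministic algorithm), amplification over $\Theta(c\log n)$ independent repeats with a union bound, and the per-iteration $O(n\sqrt{k\log k})$ cost via the reduction of Equation~(\ref{eq:hamconcat}) to two standard $2k$-mismatch instances. The only difference is the cosmetic choice of the repetition constant ($\lceil(c+1)\log_{6/5}n\rceil$ versus the paper's $4(c+1)\lceil\log n\rceil$), which does not affect the argument.
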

\begin{proof}
    As discussed in Section~\ref{sec:permutations}, if $k\in \{0,1\}$ then we can use the deterministic algorithm from Section~\ref{sec:detkmis} and achieve time complexity of $O(n)$ and no errors. Therefore, we focus on the case that $k\geq 2$.

    We first consider correctness. It follows from the discussion above that Algorithm~\ref{alg:random} does indeed determine, for every alignment $i$, whether $\ham(P_{\pi_q},(T_i)_{\pi_q}) \leq 2k$. We first show that
    \begin{enumerate}
        \item[(i)] $\ham(P_{\pi_q},(T_i)_{\pi_q}) \leq 2k$ when $\dHam(i) \leq k$;
        \item[(ii)] the probability that $\ham(P_{\pi_q},(T_i)_{\pi_q}) \leq 2k$ when $\dHam(i) > k$  is at most $5/6$.
    \end{enumerate}

    By Lemma~\ref{lem:TI-permup} we have that if $\dHam(i) \leq k$ then $\ham(P_{\pi_q},(T_i)_{\pi_q}) \leq 2k$. This proves property~(i). By Definition~\ref{dfn:TI-ktight}, $\ham(P_{\pi_q},(T_i)_{\pi_q}) > 2k$ if $\dHam(i) > k$ for all permutations $\pi_q$ that are $k$-tight for $P,T_i$. The permutation $\pi_q$ is selected uniformly at random from $\Fam$ in Step~1, hence by Lemma~\ref{lem:k-tight} it is $k$-tight for $P,T_i$ with probability at least $1/6$. This proves property~(ii). Note that we can apply Lemma~\ref{lem:k-tight} since we have assumed that $m>6k^2$ and $k\geq 2$.

    As Algorithm~\ref{alg:random} only makes false-positive errors, we can amplify the probability of giving correct outputs by repeating the algorithm. We repeat it $4(c+1) \lceil \log n \rceil$ times, where $c$ is a constant, and output any alignment which is reported by all repeats. More precisely, let $i$ be some alignment such that $\dHam(i) > k$. The probability that one run of Algorithm~\ref{alg:random} incorrectly reports position $i$ as a match is at most $5/6$. Thus, the probability that all runs output $i$ as a match is at most
    \begin{equation*}
        (5/6)^{4(c+1)\lceil \log n \rceil}<(1/2)^{(c+1)\log n}<1/n^{c+1} \,.
    \end{equation*}
    By the union bound over all positions $i$, the probability of the multi-run algorithm outputting a false match in at least one alignment is at most $n \cdot 1/n^{c+1}= 1/n^c$ as required.

    We now consider the time complexity of Algorithm~\ref{alg:random} (without amplification). Step~1 requires only constant time to pick a permutation at random. Step~2 requires $O(n)$ time by inspection of the definitions. Step~3 makes two calls to a $2k$-mismatch algorithm. For both calls the input is a pattern of length $O(m)$ and a text of length $O(n)$. Using the fastest known $k$-mismatch algorithm of Amir et al.~\cite{ALP:2004}, this step takes $O(n \sqrt{k \log k})$ time. Steps~4 and~5 require only scanning the output of Step~3 and therefore take $O(n)$ time. This gives a time complexity of $O(n \sqrt{k \log k})$ time. However, we repeat the algorithm $O(c\log n)$ times to reduce the error probability, hence $O(cn \sqrt{k \log k}\log n)$ is the total time complexity.
\end{proof}

\section{Discussion}\label{sec:discussion}

We have shown how to derive both new upper and lower bounds for a variety of pattern matching problems under polynomial transformations. In some cases we have improved on known results and in others introduced new problem definitions and solutions.  There remain however a number of open questions. First, we suspect that the true complexity of \LpolyWild is unresolved, particularly for higher polynomial transformations. For example, when $r=m$ there exists a straightforward $O(nm)$ time solution by considering the problem independently at each alignment.  It is also still uncertain if the normalised Hamming distance problem is \threeSUM-hard for polynomials of degree greater than one. For \skDecision, our fast randomised algorithm applies only when $k < \sqrt{m/6}$. However, our lower bound for the same problem applies to the case where we want to determine if the Hamming distance is at most $m-2$. This leaves a range of values of $k$ where the complexity is not yet determined.  It is also an interesting question whether our randomised solution can be efficiently modified to output the Hamming distance at each alignment rather than simply a decision about whether it is greater or less than $k$ or indeed if a new fast method can be found for this problem which will allow the presence of wildcards in the input.

\section{Acknowledgements}
MJ and BS are both supported by the EPSRC. The authors are grateful to Philip Bille for very helpful discussions on the topic of \threeSUM and related problems and their application to lower bounds for pattern matching problems.

\printbibliography

\end{document}